\documentclass[11pt]{article}

\usepackage[margin=1in]{geometry}
\usepackage[T1]{fontenc}
\usepackage[utf8]{inputenc}
\usepackage{charter}

\usepackage[protrusion=false]{microtype}
\usepackage{setspace}
\usepackage{amsmath,amssymb,amsthm,mathtools,bm}
\usepackage{booktabs,tabularx,array,threeparttable,longtable}
\usepackage{graphicx}
\graphicspath{{./}{figures/}}
\usepackage{subcaption}
\usepackage[dvipsnames]{xcolor}
\usepackage{enumitem}
\usepackage{fancyhdr}
\usepackage{lastpage}
\usepackage{placeins}
\usepackage[authoryear,round]{natbib}
\usepackage{hyperref}
\usepackage[nameinlink,capitalize,noabbrev]{cleveref}

\definecolor{navy}{HTML}{17365D}
\definecolor{teal}{HTML}{1B7F79}
\definecolor{lightblue}{HTML}{EDF4FA}
\definecolor{lightgray}{HTML}{F4F5F7}
\hypersetup{
  colorlinks=true,
  linkcolor=navy,
  citecolor=teal,
  urlcolor=navy,
  pdftitle={Efficient estimation and the cost of complete-case coarsening under monotone sequential MAR},
  pdfauthor={Keivan Bolouri}
}

\newtheorem{assumption}{Assumption}
\newtheorem{theorem}{Theorem}
\newtheorem{proposition}{Proposition}
\newtheorem{corollary}{Corollary}
\newtheorem{lemma}{Lemma}
\theoremstyle{definition}
\newtheorem{remark}{Remark}
\crefname{assumption}{Assumption}{Assumptions}
\Crefname{assumption}{Assumption}{Assumptions}
\crefname{theorem}{Theorem}{Theorems}
\crefname{proposition}{Proposition}{Propositions}
\crefname{corollary}{Corollary}{Corollaries}
\crefname{lemma}{Lemma}{Lemmas}

\newcommand{\E}{\mathbb{E}}
\newcommand{\Pp}{\mathbb{P}}
\newcommand{\Var}{\operatorname{Var}}

\newcommand{\indep}{\perp\!\!\!\perp}
\newcommand{\expit}{\operatorname{expit}}
\newcommand{\norm}[1]{\left\lVert #1\right\rVert_2}
\newcommand{\1}{\mathbf{1}}
\newcommand{\cc}{\mathrm{cc}}
\newcommand{\seq}{\mathrm{seq}}
\newcommand{\F}{\mathrm{F}}

\title{\vspace{-1.2cm}\textbf{Efficient estimation and the cost of complete-case\\ coarsening under monotone sequential MAR}}
\author{Keivan Bolouri\\
\small Department of Statistics and Data Science, University of California, Los Angeles,\\
\small Los Angeles, California, USA\\[2pt]
\small Correspondence: \href{mailto:keivanbolouri@ucla.edu}{keivanbolouri@ucla.edu}}
\date{September 10, 2026}

\begin{document}
\maketitle

\begin{spacing}{1.0}
\begin{abstract}
Complete-case coarsening discards observed confounder values from partially complete records. We study its consequences for average treatment effect estimation with two ordered, partially observed confounders under monotone sequential missing at random. We specialize the standard coarsening-at-random transformation to the causal influence function, establish the canonical gradient, and give an exact drift identity for a cross-fitted estimator with sequential multiple robustness. In the submodel where both sequential and complete-case missing-at-random assumptions hold, we express the efficiency loss from coarsening as a nonnegative expectation involving two iterated projections. The gain is strict when the intermediate confounder supplies residual information where second-stage missingness occurs. Oracle simulations and deterministic quadrature illustrate this efficiency comparison. When second-stage response depends on the intermediate confounder, the comparison instead concerns identification: coarsening can introduce persistent bias. Estimated-nuisance simulations include a bounded-propensity design and a Gaussian stress design. The latter exhibits substantial interval undercoverage and can reverse the finite-sample precision ordering, qualifying the practical interpretation of the efficiency bound.
\end{abstract}

\noindent\textbf{Key words and phrases:} Average treatment effect; efficient influence function; missing confounders; monotone missingness; multiple robustness; sequential MAR

\medskip
\noindent\textbf{MSC 2020:} Primary 62D20; secondary 62D10, 62G20

\end{spacing}
\clearpage

\section{Introduction}

Observational studies frequently combine two statistical difficulties: treatment is confounded, and one or more baseline confounders are missing. Complete-case analysis is generally inefficient and can be biased. Multiple imputation, inverse-probability weighting, augmented estimators, generalized raking, and targeted learning address parts of this problem, but their validity depends on how the causal and missingness assumptions fit the observed-data structure \citep{bang2005,williamson2012,seaman2014,levis2025,benz2025,williamson2026,wen2026}.

\citet{levis2025} developed robust and efficient ATE estimators under complete-case missing at random (CCMAR). If $L_c$ denotes fully observed confounders, $L_p$ the vector of partially observed confounders, and $S$ the indicator that every component of $L_p$ is observed, their assumption is
\[
S\indep L_p\mid (L_c,A,Y).
\]
The reduction to $S$ avoids modeling a possibly nonmonotone collection of response patterns. It also means that subjects with only part of $L_p$ observed do not directly contribute their observed confounder values to the final estimating function. Levis and colleagues identify the use of such partial records under alternative assumptions as an important direction for future research.

\citet{benz2025} provide the applied counterpart to that development, comparing
the robust estimators of \citet{levis2025} against ad hoc alternatives---%
sequential imputation followed by outcome regression, or by inverse-probability
weighting---across a range of missing-confounder designs. They report that no
single estimator is uniformly best and that the simpler alternatives are
adequate in several settings. Their study motivates a practical distinction: an efficiency ordering for influence functions need not translate into a finite-sample precision ordering when nuisance functions are estimated. We study that distinction in \cref{app:crossfit}.

This paper studies the transparent special case in which the components of $L_p$ have a scientifically meaningful order and are observed monotonically. We write
\[
L_c=W,\qquad L_p=(L_1,L_2),
\]
so the notation $L_1,L_2$ refines---rather than replaces---the earlier $L_c,L_p$ notation. The possible patterns are: neither partially observed confounder is available; $L_1$ alone is available; or both are available. Sequential MAR permits the probability of observing $L_2$ to depend on the already observed $L_1$. This model is useful when, for example, an inexpensive screening variable precedes a more burdensome laboratory measure, or when data acquisition follows an ordered protocol.

Sequential inverse-probability augmentation and efficient estimation with monotone missingness are established ideas \citep{rrz1994,gill1997,tsiatis2006,barnwell2025}. Accordingly, our contribution is not the generic sequential augmentation formula. It is the following causal and comparative analysis:
\begin{enumerate}[leftmargin=*,itemsep=3pt]
\item We specialize the monotone sequential construction to the full-data influence function for a point-exposure ATE with missing confounders, allowing response probabilities to depend on the fully observed treatment and outcome.
\item We give a self-contained observed-likelihood and tangent-space proof of the efficient influence function, followed by an exact remainder identity and a cross-fitted estimator.
\item On the submodel where sequential MAR and CCMAR are simultaneously valid, we derive the exact nonnegative efficiency loss from collapsing the intermediate pattern.
\item We distinguish this efficiency comparison from the case in which second-stage response depends on $L_1$. There, CCMAR generally fails, so the comparison concerns identification, bias, and coverage rather than two efficiency bounds for the same model.
\end{enumerate}

The final distinction is practically important. A statement that the pattern-aware method is ``more efficient'' is justified only under common validity. Outside that submodel, better mean-squared error may result primarily from removing bias under a different identifying assumption. Sequential MAR and unrestricted CCMAR are not generally nested models; within the sequential model, \cref{ass:common} defines their common-validity submodel.

\section{Observed data, causal target, and assumptions}

\subsection{Notation and observation patterns}

For one individual, let
\[
Z_0=(W,A,Y),\qquad Z_1=(Z_0,L_1),\qquad Z_2=(Z_1,L_2),
\]
where $W$ is a vector of fully observed baseline confounders, $A\in\{0,1\}$ is a fully observed point exposure, $Y$ is a fully observed outcome, and $L_1,L_2$ are partially observed baseline confounders. Let $R_1$ indicate that $L_1$ is observed. Among records with $R_1=1$, let $C_2$ indicate that $L_2$ is observed, and define $R_2=R_1C_2$. Thus
\[
(R_1,R_2)\in\{(0,0),(1,0),(1,1)\};
\]
the $L_2$-only pattern $(0,1)$ is excluded. The observed datum is
\[
O=(Z_0,R_1,R_1L_1,R_2,R_2L_2).
\]

\begin{table}[ht]
\centering
\caption{Relationship between the original CCMAR notation and the ordered notation used here.}
\label{tab:notation}
\begin{tabularx}{0.92\textwidth}{@{}lXX@{}}
\toprule
Object & CCMAR notation & Sequential notation \\
\midrule
Always observed confounders & $L_c$ & $W$ \\
Partially observed confounders & $L_p$ & $(L_1,L_2)$ \\
Complete-case indicator & $S$ & $R_2=R_1C_2$ \\
Intermediate partial record & discarded by coarsening & $R_1=1,R_2=0$ \\
\bottomrule
\end{tabularx}
\end{table}

Let $V=(W,L_1,L_2)$. The causal estimand is
\[
\psi=\E\{Y(1)-Y(0)\}.
\]

\begin{assumption}[Causal identification]\label{ass:causal}
For $a\in\{0,1\}$: (i) consistency, $Y=Y(A)$; (ii) conditional exchangeability, $Y(a)\indep A\mid V$; and (iii) treatment positivity, $\epsilon_e<e(V)<1-\epsilon_e$ almost surely for some $\epsilon_e>0$, where $e(v)=\Pp(A=1\mid V=v)$.
\end{assumption}

Under \cref{ass:causal}, writing $\mu_a(v)=\E(Y\mid A=a,V=v)$,
\begin{equation}\label{eq:ate-gformula}
\psi=\E\{\mu_1(V)-\mu_0(V)\}.
\end{equation}

\subsection{Sequential missing at random}

\begin{assumption}[Monotone sequential MAR]\label{ass:smar}
The response process satisfies
\begin{align}
R_1&\indep (L_1,L_2)\mid Z_0,\label{eq:mar1}\\
C_2&\indep L_2\mid (Z_1,R_1=1).\label{eq:mar2}
\end{align}
Let
\[
\pi_1(Z_0)=\Pp(R_1=1\mid Z_0),\qquad
\pi_2(Z_1)=\Pp(C_2=1\mid R_1=1,Z_1),
\]
and suppose there is $\epsilon_\pi>0$ such that $\pi_1(Z_0)\ge\epsilon_\pi$ and
$\pi_2(Z_1)\ge\epsilon_\pi$ almost surely.
\end{assumption}

Treatment and outcome are included in $Z_0$, so both response probabilities may depend on $A$ and $Y$. These are MAR/CAR restrictions because the hazards depend only on information observed at the corresponding stage. They are not outcome-independent missing-not-at-random restrictions.

\begin{proposition}[Identification of the full-data law]\label{prop:identification}
Under \cref{ass:smar}, the full-data density is identified by
\begin{equation}\label{eq:full-law-identification}
p(z_0,l_1,l_2)
=p(z_0)\,p(l_1\mid z_0,R_1=1)\,
p(l_2\mid z_1,R_1=1,C_2=1).
\end{equation}
Consequently, under \cref{ass:causal}, $\psi$ in \cref{eq:ate-gformula} is identified from the observed-data law.
\end{proposition}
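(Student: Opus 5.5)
The plan is to factor the full-data density along the monotone ordering $Z_0\to L_1\to L_2$, and then use each MAR restriction in \cref{ass:smar} to replace a full-data conditional by an observed-data conditional. By the chain rule,
\[
p(z_0,l_1,l_2)=p(z_0)\,p(l_1\mid z_0)\,p(l_2\mid z_1).
\]
The factor $p(z_0)$ is identified directly, because $Z_0$ is always observed.

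For the middle factor, \cref{eq:mar1} implies in particular $R_1\indep L_1\mid Z_0$. Hence $p(l_1\mid z_0)=p(l_1\mid z_0,R_1=1)$ wherever $\Pp(R_1=1\mid z_0)=\pi_1(z_0)>0$, and this holds almost surely by the positivity bound $\pi_1\ge\epsilon_\pi$. On the event $R_1=1$ the variable $L_1$ is observed, so this conditional is a functional of the law of $O$.

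The last factor needs two steps, and this is the step requiring care. First, \cref{eq:mar1} gives $R_1\indep(L_1,L_2)\mid Z_0$. By the weak-union property of conditional independence this yields $R_1\indep L_2\mid (Z_0,L_1)$, so
\[
p(l_2\mid z_1)=p(l_2\mid z_1,R_1=1).
\]
Second, \cref{eq:mar2} gives $C_2\indep L_2\mid(Z_1,R_1=1)$. Since $\pi_2\ge\epsilon_\pi$ ensures the conditioning event has positive probability,
\[
p(l_2\mid z_1,R_1=1)=p(l_2\mid z_1,R_1=1,C_2=1).
\]
On $\{R_1=1,C_2=1\}=\{R_2=1\}$, the full vector $Z_2$ is observed, so this conditional density is identified. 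The main point to check is that these equalities hold for almost every $z_1$ under the full-data law, not only on the support of the observed subpopulation. The positivity lower bounds guarantee this: they make the conditioning sets $\{R_1=1\}$ and $\{R_2=1\}$ have positive conditional probability given $Z_0$ and $Z_1$. As a result, the supports of $L_1$ given $z_0$ and of $L_2$ given $z_1$ are unchanged by conditioning on response.

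For the consequence, the law of $(W,A,Y,L_1,L_2)=(V,A,Y)$ is a functional of the observed-data law by \cref{eq:full-law-identification}. Therefore $\mu_a(v)=\E(Y\mid A=a,V=v)$ and the marginal law of $V$ are identified. The function $\mu_a$ is well defined on the support of $V$ because $e(V)\in(\epsilon_e,1-\epsilon_e)$ under \cref{ass:causal}(iii). Substituting into \cref{eq:ate-gformula} shows that $\psi$ is identified. That display itself is the standard g-formula argument from consistency and exchangeability, which the paper has already asserted.
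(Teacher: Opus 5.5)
Your proof is correct and follows the paper's route. You factor the full-data density along $Z_0\to L_1\to L_2$, replace each conditional by its respondent counterpart using \cref{eq:mar1} and \cref{eq:mar2}, and then apply the g-formula; you are more explicit than the paper in noting that the $L_2$ factor also needs \cref{eq:mar1} (via weak union) to drop the conditioning on $R_1=1$, a step the paper absorbs into its joint statement $p(l_1,l_2\mid z_0,R_1=1)=p(l_1,l_2\mid z_0)$.
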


\begin{proof}
By \cref{eq:mar1}, $p(l_1,l_2\mid z_0,R_1=1)=p(l_1,l_2\mid z_0)$, hence $p(l_1\mid z_0,R_1=1)=p(l_1\mid z_0)$. By \cref{eq:mar2}, among $R_1=1$, $p(l_2\mid z_1,C_2=1)=p(l_2\mid z_1)$. Combining these conditional densities with $p(z_0)$ gives \cref{eq:full-law-identification}. The g-formula is a functional of the identified full-data law.
\end{proof}

\Cref{eq:full-law-identification} is a retrospective factorization because $A$ and $Y$ occur in $Z_0$. This causes no contradiction with the causal factorization: any identified joint law can also be refactored as $p(v)p(a\mid v)p(y\mid a,v)$, after which \cref{ass:causal} gives the causal interpretation.

\section{Efficient influence function under sequential MAR}

\subsection{Full-data gradient}

The efficient influence function for the ATE in the nonparametric full-data model is
\begin{equation}\label{eq:full-eif}
D_{\F}(Z_2)=G(Z_2)-\psi,
\end{equation}
where
\begin{equation}\label{eq:G}
G(Z_2)=\mu_1(V)-\mu_0(V)
+\frac{A}{e(V)}\{Y-\mu_1(V)\}
-\frac{1-A}{1-e(V)}\{Y-\mu_0(V)\}.
\end{equation}
Define the iterated projections
\begin{equation}\label{eq:Qs}
Q_1(Z_1)=\E\{G(Z_2)\mid Z_1\},\qquad
Q_0(Z_0)=\E\{G(Z_2)\mid Z_0\}.
\end{equation}

Throughout, $\mathcal M$ denotes the observed-data model: the set of laws of
$O$ generated by an unrestricted full-data law $p(z_0,l_1,l_2)$ together with
unrestricted response probabilities $\pi_1,\pi_2$ satisfying \cref{ass:smar}
and its positivity bound. \Cref{lem:tangent} in \cref{app:tangent} shows that
the tangent space of $\mathcal M$ is all of $L_2^0(P_O)$, the mean-zero square-integrable
functions of $O$. The model is therefore nonparametric in the observed-data
sense, a pathwise differentiable parameter has exactly one gradient, and that
gradient is its efficient influence function.

\begin{theorem}[Canonical gradient]\label{thm:eif}
Suppose \cref{ass:causal,ass:smar} hold, $D_{\F}\in L_2(P)$, and the response
probabilities satisfy the two-sided bounds of \cref{lem:tangent}. In the
nonparametric observed-data model $\mathcal M$ induced by monotone sequential
MAR, the efficient influence function for $\psi$ is
\begin{equation}\label{eq:seq-eif}
\boxed{
D_{\seq}(O)=Q_0-\psi
+\frac{R_1}{\pi_1}(Q_1-Q_0)
+\frac{R_2}{\pi_1\pi_2}(G-Q_1).}
\end{equation}
All functions in a term are evaluated only when the variables required for that term are observed.
\end{theorem}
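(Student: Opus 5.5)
Because \cref{lem:tangent} shows that the tangent space of $\mathcal M$ is all of $L_2^0(P_O)$, the plan is not to project anything onto a tangent space. It suffices to show that $D_{\seq}$ is \emph{a} gradient: that it lies in $L_2^0(P_O)$ and that, for every regular one-dimensional submodel $\{P_t\}\subset\mathcal M$ with score $s(O)$,
\[
\frac{d}{dt}\psi(P_t)\big|_{t=0}=\E\{D_{\seq}(O)s(O)\}.
\]
Uniqueness of the gradient then makes it the canonical gradient. Square integrability follows from $D_{\F}\in L_2(P)$, Jensen's inequality for the conditional expectations $Q_1$ and $Q_0$, and the positivity bounds $\pi_1,\pi_2\ge\epsilon_\pi$. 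Mean zero follows from iterated expectations. By \cref{eq:mar2}, $\E\{R_2/(\pi_1\pi_2)\mid Z_2\}=1$, so the last term has conditional mean $\E(G-Q_1\mid Z_1)=0$. The middle term is handled the same way using \cref{eq:mar1}. Finally $\E Q_0=\psi$.

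For the pathwise derivative, I will use the observed-data likelihood factorization implied by \cref{ass:smar}:
\[
p(z_0)\,\{1-\pi_1\}^{1-r_1}\Bigl[\pi_1\,p(l_1\mid z_0)\,\{1-\pi_2\}^{1-c_2}\{\pi_2\,p(l_2\mid z_1)\}^{c_2}\Bigr]^{r_1}.
\]
This decomposes any score as $s=s_0(Z_0)+s_{\pi_1}+R_1 s_1(Z_1)+s_{\pi_2}+R_2 s_2(Z_2)$. Here $s_0$, $s_1$ and $s_2$ are mean-zero conditional scores of the full-data factors $p(z_0)$, $p(l_1\mid z_0)$ and $p(l_2\mid z_1)$. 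The terms $s_{\pi_1}=(R_1-\pi_1)h_1(Z_0)$ and $s_{\pi_2}=R_1(C_2-\pi_2)h_2(Z_1)$ are the response scores. By \cref{prop:identification}, $\psi$ depends on $P$ only through the full-data law. The full-data pathwise derivative is therefore $\E\{D_{\F}(s_0+s_1+s_2)\}$, using the fact that $D_{\F}$ is the full-data EIF in the nonparametric full-data model (\cref{eq:full-eif}). Equivalently, this derivative equals $\E\{(Q_0-\psi)s_0\}+\E\{(Q_1-Q_0)s_1\}+\E\{(G-Q_1)s_2\}$, by orthogonality of the conditional-mean-zero pieces.

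It remains to match $\E\{D_{\seq}s\}$ term by term. First, $D_{\seq}$ is orthogonal to both response scores. Conditional on $(Z_0,R_1)$, the last two terms of $D_{\seq}$ have mean zero given $Z_0$ under \cref{eq:mar1}, so they are uncorrelated with $(R_1-\pi_1)h_1$. Also $\E\{(R_1-\pi_1)\mid Z_0\}Q_0=0$. Conditioning on $(Z_1,R_1=1,C_2)$ treats the $\pi_2$ score the same way. Second, for the full-data pieces, I will compute:
\begin{itemize}[leftmargin=*]
\item $\E\{(R_1/\pi_1)(Q_1-Q_0)R_1s_1\}=\E\{(Q_1-Q_0)s_1\}$, by conditioning on $Z_1$ and using $\E(R_1\mid Z_1)=\pi_1$ under \cref{eq:mar1};
\item $\E\{(R_2/(\pi_1\pi_2))(G-Q_1)R_2s_2\}=\E\{(G-Q_1)s_2\}$, by the analogous argument under \cref{eq:mar2};
\item every cross term vanishes, for example $\E\{(R_1/\pi_1)(Q_1-Q_0)s_0\}=\E\{\E(Q_1-Q_0\mid Z_0)s_0\}=0$.
\end{itemize}
Summing these gives exactly the full-data derivative.

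The main obstacle is bookkeeping rather than conceptual. The cross terms between the $R$-weighted pieces and the scores of other factors must be shown to vanish. Each one requires conditioning on the right $\sigma$-field, $Z_0$, $Z_1$ or $(Z_1,R_1)$, and invoking the correct MAR condition. This must be done carefully when $R_1s_1$ is multiplied by the $R_2$ term, where $R_2R_1=R_2$, and there one first conditions on $Z_2$ to kill $\E(G-Q_1\mid Z_1)$. I will organize the argument as a $3\times3$ table of (term of $D_{\seq}$) $\times$ (full-data score component), plus the two response-score rows, and verify each entry with a single conditional-expectation step.
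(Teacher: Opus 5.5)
Your proposal is correct and follows essentially the same route as the paper. Both arguments use the observed-likelihood factorization and the five-component score decomposition. Both show orthogonality to the two response scores, match the three increments of $D_{\seq}$ against $s_0$, $R_1s_1$, $R_2s_2$, and use \cref{lem:tangent} for uniqueness. The only difference is that your $3\times3$ cross-term table is unnecessary in the paper's version: it notes that the three increments lie in the mutually orthogonal spaces $\mathcal T_0,\mathcal T_1,\mathcal T_2$, so the cross terms vanish automatically.
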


\begin{proof}
We give the observed-likelihood argument. Write
\[
p_0(z_0)=p(z_0),\quad p_1(l_1\mid z_0)=p(l_1\mid z_0),\quad
p_2(l_2\mid z_1)=p(l_2\mid z_1).
\]
Under \cref{ass:smar}, the observed-data likelihood contribution factorizes as
\begin{align}
p_O(o)
={}&p_0(z_0)
\{1-\pi_1(z_0)\}^{1-r_1}
\{\pi_1(z_0)p_1(l_1\mid z_0)\}^{r_1}\notag\\
&\times\{1-\pi_2(z_1)\}^{r_1-r_2}
\{\pi_2(z_1)p_2(l_2\mid z_1)\}^{r_2}.
\label{eq:observed-likelihood}
\end{align}
For regular parametric submodels, the tangent space is the closure of sums of the mutually orthogonal components
\begin{align*}
\mathcal T_0&=\{s_0(Z_0):\E(s_0)=0\},\\
\mathcal T_1&=\{R_1s_1(Z_1):\E(s_1\mid Z_0)=0\},\\
\mathcal T_2&=\{R_2s_2(Z_2):\E(s_2\mid Z_1)=0\},\\
\mathcal T_{\pi_1}&=\{(R_1-\pi_1)h_1(Z_0)\},\\
\mathcal T_{\pi_2}&=\{(R_2-R_1\pi_2)h_2(Z_1)\}.
\end{align*}
Let $q_1=\E(D_{\F}\mid Z_1)=Q_1-\psi$ and
$q_0=\E(D_{\F}\mid Z_0)=Q_0-\psi$. Candidate \cref{eq:seq-eif} can be written
\begin{equation}\label{eq:eif-centered}
D_{\seq}=q_0+\frac{R_1}{\pi_1}(q_1-q_0)
+\frac{R_2}{\pi_1\pi_2}(D_{\F}-q_1).
\end{equation}
\Cref{lem:tangent} shows that these five components are mutually orthogonal
and that their closed linear span is the whole of $L_2^0(P_O)$; every tangent
direction is therefore a sum of the five types, which is what makes the
verification below exhaustive rather than merely suggestive.

The three terms in \cref{eq:eif-centered} belong respectively to
$\mathcal T_0,\mathcal T_1,$ and $\mathcal T_2$, because the required conditional
means vanish by iterated expectation; the same argument gives
$\E(D_{\seq})=0$. Thus $D_{\seq}$ belongs to the observed-data tangent space and is
orthogonal to the two response-mechanism tangent components.

It remains to verify that $D_{\seq}$ represents the pathwise derivative. For $s_0\in\mathcal T_0$,
\[
\E(D_{\seq}s_0)=\E(q_0s_0)=\E(D_{\F}s_0).
\]
For a full-law score $s_1$ satisfying $\E(s_1\mid Z_0)=0$, the observed score is $R_1s_1$, and conditional expectation over $R_1,C_2$ gives
\[
\E(D_{\seq}R_1s_1)=\E\{(q_1-q_0)s_1\}=\E(D_{\F}s_1).
\]
The last equality uses $\E(s_1\mid Z_0)=0$ and $q_1=\E(D_{\F}\mid Z_1)$. Similarly, for $s_2$ satisfying $\E(s_2\mid Z_1)=0$, the observed score is $R_2s_2$ and
\[
\E(D_{\seq}R_2s_2)=\E\{(D_{\F}-q_1)s_2\}=\E(D_{\F}s_2).
\]
Finally, $\psi$ does not depend on $\pi_1$ or $\pi_2$, and the inner products of
$D_{\seq}$ with their score components are zero; this is verified in
\cref{app:scores}. By \cref{lem:tangent} the five families of scores exhaust the
tangent space, so \cref{eq:eif-centered} represents the derivative along every
tangent direction and $D_{\seq}$ is a gradient. Because it also lies in the
tangent space, it is the canonical gradient, and it is the unique influence
function of $\psi$ in $\mathcal M$; in particular it has minimum variance among
regular influence functions \citep{tsiatis2006}.
\end{proof}

\begin{remark}
The proof is a two-stage specialization of general monotone-MAR/CAR theory \citep{rrz1994,gill1997,barnwell2025}. Its purpose is to establish rigorously that the causal pseudo-outcome in \cref{eq:G}, after sequential projection, yields the efficient observed-data gradient for this particular ATE problem.
\end{remark}

\section{Estimation, robustness, and inference}

\subsection{Cross-fitted one-step estimator}

Let $O_1,\ldots,O_n$ be independent and identically distributed observations from $P_O$, and let $\eta=(e,\mu_0,\mu_1,\pi_1,\pi_2,Q_0,Q_1)$. With nuisance estimates trained outside observation $i$'s validation fold, define
\begin{equation}\label{eq:Hhat}
\widehat H_i=\widehat Q_{0,i}
+\frac{R_{1i}}{\widehat\pi_{1,i}}(\widehat Q_{1,i}-\widehat Q_{0,i})
+\frac{R_{2i}}{\widehat\pi_{1,i}\widehat\pi_{2,i}}
(\widehat G_i-\widehat Q_{1,i}),
\end{equation}
and
\begin{equation}\label{eq:estimator}
\widehat\psi_{\seq}=\frac{1}{n}\sum_{i=1}^n\widehat H_i.
\end{equation}
An estimated standard error is
\begin{equation}\label{eq:se}
\widehat{\mathrm{se}}(\widehat\psi_{\seq})
=\left[\frac{1}{n(n-1)}\sum_{i=1}^n
(\widehat H_i-\widehat\psi_{\seq})^2\right]^{1/2}.
\end{equation}

One practical training sequence is: fit $\pi_1$ to all training records; fit $\pi_2$ among $R_1=1$ records; fit the causal regressions on complete records with inverse-response weights to reconstruct the full law; regress $\widehat G$ on $Z_1$ among complete records to obtain $Q_1$; and regress the stage-two augmented pseudo-outcome on $Z_0$ among $R_1=1$ records to obtain $Q_0$. Other learners are possible, provided they estimate the same nuisance functions and the positivity and convergence conditions below are respected.

\subsection{Exact drift and sequential multiple robustness}

For arbitrary fixed functions $\bar G,\bar Q_0,\bar Q_1,\bar\pi_1,\bar\pi_2$ with positive response probabilities and finite displayed expectations, define
\[
Q_1^{\bar G}=\E(\bar G\mid Z_1),\qquad
Q_0^{\bar G}=\E(\bar G\mid Z_0),
\]
and let $H(\bar\eta)$ denote \cref{eq:Hhat} with bars in place of hats.

\begin{proposition}[Exact missingness-layer drift]\label{prop:drift}
Let all expectations be taken under the true law $P$, and let $Q_0^{\bar G}$ and
$Q_1^{\bar G}$ denote true conditional expectations of the working pseudo-outcome
$\bar G$. Under \cref{ass:smar} alone, and with no requirement that any working
nuisance be correct,
\begin{align}
\E\{H(\bar\eta)\}-\E(\bar G)
={}&\E\!\left[\left(1-\frac{\pi_1}{\bar\pi_1}\right)
(\bar Q_0-Q_0^{\bar G})\right]\notag\\
&+\E\!\left[\frac{\pi_1}{\bar\pi_1}
\left(1-\frac{\pi_2}{\bar\pi_2}\right)
(\bar Q_1-Q_1^{\bar G})\right].
\label{eq:missing-drift}
\end{align}
If $\bar G$ is formed from $(\bar e,\bar\mu_0,\bar\mu_1)$ as in \cref{eq:G}, then
\begin{align}
\E(\bar G)-\psi
=\E\!\left[
\frac{\bar e-e}{\bar e}(\bar\mu_1-\mu_1)
+\frac{\bar e-e}{1-\bar e}(\bar\mu_0-\mu_0)
\right].
\label{eq:causal-drift}
\end{align}
\end{proposition}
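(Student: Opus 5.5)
The plan is to prove \cref{eq:missing-drift} by iterated conditioning on the response indicators, working from the innermost term outward, and then to prove \cref{eq:causal-drift} by conditioning on $(A,V)$ inside the full-data law.

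\textbf{Second-stage term.} Start from
\[
H(\bar\eta)=\bar Q_0+\frac{R_1}{\bar\pi_1}(\bar Q_1-\bar Q_0)+\frac{R_2}{\bar\pi_1\bar\pi_2}(\bar G-\bar Q_1).
\]
\begin{enumerate}[leftmargin=*,itemsep=2pt]
\item Condition on $(Z_2,R_1=1)$. By \cref{eq:mar2}, $\E(C_2\mid Z_2,R_1=1)=\pi_2(Z_1)$, so this term becomes $\frac{R_1\pi_2}{\bar\pi_1\bar\pi_2}(\bar G-\bar Q_1)$.
\item Condition on $(Z_1,R_1=1)$. Using \cref{eq:mar2} again, $\E(\bar G\mid Z_1,R_1=1)=Q_1^{\bar G}$, because $\bar G$ depends on $L_2$ only and $L_2$ has the same conditional law given $Z_1$ whether or not $R_1=1$. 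This relies on \cref{eq:mar1}, which gives $p(l_2\mid z_1,R_1=1)=p(l_2\mid z_1)$.
\item The result is $\frac{R_1}{\bar\pi_1}\frac{\pi_2}{\bar\pi_2}(Q_1^{\bar G}-\bar Q_1)$.
\end{enumerate}

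\textbf{First-stage term.} Add the second-stage result to the first-stage term and write $\bar Q_1-\bar Q_0=(\bar Q_1-Q_1^{\bar G})+(Q_1^{\bar G}-\bar Q_0)$. The $R_1$-part then becomes
\[
\frac{R_1}{\bar\pi_1}\Bigl[\Bigl(1-\frac{\pi_2}{\bar\pi_2}\Bigr)(\bar Q_1-Q_1^{\bar G})+(Q_1^{\bar G}-\bar Q_0)\Bigr].
\]
Condition on $Z_1$ and use \cref{eq:mar1}, which gives $\E(R_1\mid Z_1)=\pi_1(Z_0)$. This turns $R_1/\bar\pi_1$ into $\pi_1/\bar\pi_1$, and produces the second summand of \cref{eq:missing-drift} directly. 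For the remaining piece, condition on $Z_0$: $\E(Q_1^{\bar G}\mid Z_0)=Q_0^{\bar G}$ by the tower property under the full-data law. We then obtain
\[
\E\Bigl[\bar Q_0+\frac{\pi_1}{\bar\pi_1}(Q_0^{\bar G}-\bar Q_0)\Bigr].
\]
Subtracting $\E(\bar G)=\E(Q_0^{\bar G})$ leaves $\E[(1-\pi_1/\bar\pi_1)(\bar Q_0-Q_0^{\bar G})]$, which is the first summand. Finite expectations together with $\bar\pi>0$ justify each interchange.

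\textbf{Causal drift.} For \cref{eq:causal-drift}, compute $\E(\bar G\mid V)$. Using $\E(AY\mid V)=e\mu_1$ and $\E(A\mid V)=e$, the treated part is
\[
\bar\mu_1+\frac{e}{\bar e}(\mu_1-\bar\mu_1)=\mu_1+\frac{\bar e-e}{\bar e}(\bar\mu_1-\mu_1).
\]
The control part is handled the same way:
\[
-\bar\mu_0-\frac{1-e}{1-\bar e}(\mu_0-\bar\mu_0)=-\mu_0+\frac{\bar e-e}{1-\bar e}(\bar\mu_0-\mu_0),
\]
which follows because $1-\frac{1-e}{1-\bar e}=\frac{e-\bar e}{1-\bar e}$. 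Subtracting $\psi=\E(\mu_1-\mu_0)$ then gives the claim.

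\textbf{Main obstacle.} The delicate step is the conditional-law bookkeeping: confirming that $Q_1^{\bar G}$, defined under the full law, equals the conditional mean among $R_1=1$ (and among responders), and that it is legitimate to evaluate $\bar Q_1$ and $Q_1^{\bar G}$ on $R_1=1$ only. Both points reduce to \cref{eq:mar1,eq:mar2}, as in \cref{prop:identification}.
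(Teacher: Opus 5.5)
Your proposal is correct and follows essentially the paper's argument: average out the response indicators using \cref{eq:mar1,eq:mar2}, add and subtract $Q_1^{\bar G}$ and $Q_0^{\bar G}$, apply the tower property, and condition $\bar G$ on $V$ for \cref{eq:causal-drift}. The only difference is ordering---the paper conditions on the full data $Z_2$ first, so $R_1$ and $R_2$ become $\pi_1$ and $\pi_1\pi_2$ in one step and the ``law among $R_1=1$'' bookkeeping you flag as the main obstacle never arises (your step~2 in fact needs only \cref{eq:mar1}, via weak union, not \cref{eq:mar2}).
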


\begin{proof}
Taking expectation of $H(\bar\eta)$ over the response indicators conditional on the full data replaces $R_1$ by $\pi_1$ and $R_2$ by $\pi_1\pi_2$. Subtracting $\E(\bar G)$ and adding and subtracting $Q_j^{\bar G}$ yields \cref{eq:missing-drift}; the terms involving $Q_j^{\bar G}$ reduce to conditional expectations of $\bar G$. For \cref{eq:causal-drift}, condition the barred version of \cref{eq:G} on $V$, use $\E(A\mid V)=e(V)$ and $\E(Y\mid A=a,V)=\mu_a(V)$, then collect products of the treatment- and outcome-regression errors.
\end{proof}

\begin{corollary}[Sequential multiple robustness]\label{cor:robustness}
The estimating equation is unbiased for $\psi$ under every nuisance configuration satisfying all three clauses:
\begin{enumerate}[label=(\roman*),leftmargin=*]
\item $\bar e=e$ or $(\bar\mu_0,\bar\mu_1)=(\mu_0,\mu_1)$;
\item $\bar\pi_1=\pi_1$ or $\bar Q_0=Q_0^{\bar G}$;
\item $\bar\pi_2=\pi_2$ or $\bar Q_1=Q_1^{\bar G}$.
\end{enumerate}
This gives up to $2\times2\times2=8$ sufficient configurations. The result is a layered or sequential form of multiple robustness. The alternatives concern the true projections of the working pseudo-outcome; a particular regression specification is not guaranteed to attain each configuration, and an arbitrary subset of nuisance models cannot be misspecified without restriction.
\end{corollary}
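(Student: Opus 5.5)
The corollary follows from \cref{prop:drift} by splitting the total bias additively and showing each piece vanishes under the corresponding clause. I would write
\[
\E\{H(\bar\eta)\}-\psi=\bigl[\E\{H(\bar\eta)\}-\E(\bar G)\bigr]+\bigl[\E(\bar G)-\psi\bigr].
\]
The first bracket is the missingness-layer drift \cref{eq:missing-drift}. The second is the causal drift \cref{eq:causal-drift}. Unbiasedness of the estimating equation means $\E\{H(\bar\eta)\}=\psi$, so it suffices to show that each of the three product terms is zero under the clauses.

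For the causal layer, clause (i) settles \cref{eq:causal-drift}. If $\bar e=e$, the common factor $\bar e-e$ vanishes in both summands. If $(\bar\mu_0,\bar\mu_1)=(\mu_0,\mu_1)$, the second factors vanish. Boundedness of $\bar e$ away from $0$ and $1$ keeps the expression finite.

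For the missingness layer, I treat the two summands of \cref{eq:missing-drift} separately.
\begin{itemize}
\item The first summand is zero if $\bar\pi_1=\pi_1$, since then $1-\pi_1/\bar\pi_1=0$. It is also zero if $\bar Q_0=Q_0^{\bar G}$. This is clause (ii).
\item The second summand is zero if $\bar\pi_2=\pi_2$, or if $\bar Q_1=Q_1^{\bar G}$. This is clause (iii).
\end{itemize}
The step needing care is that these are exact identities in which each product involves only its own stage. Clause (iii) therefore does not require $\bar\pi_1=\pi_1$: the weight $\pi_1/\bar\pi_1$ is merely a bounded multiplier, given positivity of $\bar\pi_1$ and finiteness of the expectations assumed in \cref{prop:drift}. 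Likewise, clause (ii) is about the true projection $Q_0^{\bar G}$ of the \emph{working} $\bar G$. So it holds regardless of whether clause (i) is met via $e$ or via $\mu$. Because \cref{eq:missing-drift} is stated for arbitrary $\bar G$, the layers decouple.

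Combining the three clauses, every configuration in the $2\times2\times2$ product makes all three product terms vanish, so $\E\{H(\bar\eta)\}=\psi$. The closing caveats in the statement need no proof; I would only note them. The alternatives $\bar Q_j=Q_j^{\bar G}$ refer to projections of the working pseudo-outcome, which a fitted regression need not attain. Misspecifying both members of any single clause leaves a generally nonzero product term.
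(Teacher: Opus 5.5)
Your proposal is correct and follows the paper's intended argument: add the missingness-layer drift \eqref{eq:missing-drift} to the causal drift \eqref{eq:causal-drift} from \cref{prop:drift}, and note that each of the three product terms vanishes under its matching clause. One small refinement: under clause (iii) the second summand of \eqref{eq:missing-drift} is zero pointwise, so you do not need $\pi_1/\bar\pi_1$ to be bounded, which positivity of $\bar\pi_1$ alone would not guarantee anyway.
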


\subsection{Asymptotic normality}

\begin{theorem}[Cross-fitted asymptotic linearity]\label{thm:asymptotics}
Use a fixed number of folds. Suppose, uniformly across folds: (i) fitted observation probabilities and treatment probabilities are bounded away from zero (and the treatment probability from one); (ii) $\norm{\widehat H-H(\eta)}=o_p(1)$ and the influence functions have uniformly integrable $2+\delta$ moments for some $\delta>0$; and (iii)
\begin{align*}
&\norm{\widehat e-e}
\{\norm{\widehat\mu_1-\mu_1}+\norm{\widehat\mu_0-\mu_0}\}
=o_p(n^{-1/2}),\\
&\norm{\widehat\pi_1-\pi_1}\,
\norm{\widehat Q_0-Q_0^{\widehat G}}=o_p(n^{-1/2}),\\
&\norm{\widehat\pi_2-\pi_2}\,
\norm{\widehat Q_1-Q_1^{\widehat G}}=o_p(n^{-1/2}).
\end{align*}
Then
\[
\sqrt n(\widehat\psi_{\seq}-\psi)
=\frac{1}{\sqrt n}\sum_{i=1}^nD_{\seq}(O_i)+o_p(1)
\rightsquigarrow N\{0,\Var(D_{\seq})\}.
\]
The variance estimator in \cref{eq:se} is consistent, and the estimator attains the semiparametric efficiency bound.
\end{theorem}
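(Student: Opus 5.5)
The argument follows the standard cross-fitting template. On each fold $k$ with validation set $I_k$ and training-sample nuisance estimate $\widehat\eta_k$, we decompose
\[
\widehat\psi_{\seq}-\psi
=(\mathbb P_n-P)D_{\seq}
+\sum_k\frac{|I_k|}{n}(\mathbb P_{n,k}-P)\{H(\widehat\eta_k)-H(\eta)\}
+\sum_k\frac{|I_k|}{n}\bigl[P\{H(\widehat\eta_k)\}-\psi\bigr].
\]
Here $\mathbb P_{n,k}$ is the empirical measure on fold $k$, and $P\{H(\widehat\eta_k)\}$ integrates over a new observation with $\widehat\eta_k$ held fixed. We use $H(\eta)-\psi=D_{\seq}$, which follows from \cref{eq:seq-eif} and \cref{eq:Hhat}. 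The first term is the efficient linear term. It is $O_p(n^{-1/2})$ and asymptotically normal by the central limit theorem, because $D_{\seq}\in L_2$ (the response and treatment probabilities are bounded and $D_{\F}\in L_2$).

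\emph{Empirical-process term.} We condition on the training sample. Then $H(\widehat\eta_k)-H(\eta)$ is a fixed function, and the fold observations are i.i.d.\ and independent of it. Chebyshev's inequality therefore gives
\[
(\mathbb P_{n,k}-P)\{H(\widehat\eta_k)-H(\eta)\}=O_p\bigl(n^{-1/2}\norm{H(\widehat\eta_k)-H(\eta)}\bigr)=o_p(n^{-1/2})
\]
by condition (ii). Since the number of folds is fixed, summing over folds is harmless. No Donsker condition is needed.

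\emph{Drift term.} This is the main step. We apply \cref{prop:drift} with $\bar\eta=\widehat\eta_k$ treated as fixed, splitting
\[
P\{H(\widehat\eta_k)\}-\psi=\bigl[P\{H(\widehat\eta_k)\}-P(\widehat G)\bigr]+\bigl[P(\widehat G)-\psi\bigr].
\]
\begin{itemize}
\item For the first bracket, \cref{eq:missing-drift} gives two product terms. The factor $1-\pi_1/\widehat\pi_1$ equals $(\widehat\pi_1-\pi_1)/\widehat\pi_1$, and $\widehat\pi_1$ is bounded below by (i). Cauchy--Schwarz then bounds the first product by $C\norm{\widehat\pi_1-\pi_1}\,\norm{\widehat Q_0-Q_0^{\widehat G}}$. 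The second product is bounded similarly by $C\norm{\widehat\pi_2-\pi_2}\,\norm{\widehat Q_1-Q_1^{\widehat G}}$, using that $\pi_1/\widehat\pi_1$ is bounded.
\item For the second bracket, \cref{eq:causal-drift} together with the bounds on $\widehat e$ and Cauchy--Schwarz gives $C\norm{\widehat e-e}\{\norm{\widehat\mu_1-\mu_1}+\norm{\widehat\mu_0-\mu_0}\}$.
\end{itemize}
Condition (iii) makes all three products $o_p(n^{-1/2})$.

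One point needs care. The projections $Q_j^{\widehat G}$ must be those of the fixed working $\widehat G$, which is exactly how \cref{prop:drift} and (iii) are phrased. The norms must also be taken under $P$ on the relevant domains, for example $\widehat Q_1$ on $Z_1$ rather than only among complete records. This is where I expect the bookkeeping to be most delicate. It is handled by conditioning on the training sample before integrating.

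\emph{Conclusion.} Combining the three terms yields the asymptotic linear expansion, and the CLT gives the normal limit with variance $\Var(D_{\seq})$. For the variance estimator, we show $n^{-1}\sum_i(\widehat H_i-\widehat\psi_{\seq})^2\to\Var(D_{\seq})$ in probability, in two steps:
\begin{itemize}
\item $n^{-1}\sum_i\{\widehat H_i-H_i(\eta)\}^2=o_p(1)$. This follows fold by fold from conditional Markov's inequality and (ii).
\item The law of large numbers applies to $H(\eta)^2$, using the uniformly integrable $2+\delta$ moments. Consistency of $\widehat\psi_{\seq}$ then handles the centering.
\end{itemize}
Since $\widehat{\mathrm{se}}^2$ is this quantity divided by $n-1$, it consistently estimates $\Var(D_{\seq})/n$. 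Efficiency is immediate: the influence function is $D_{\seq}$, which \cref{thm:eif} identifies as the canonical gradient in $\mathcal M$.
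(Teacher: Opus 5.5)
Your proposal is correct and follows the paper's proof essentially step for step. It uses the same three-term cross-fitting decomposition, conditional Chebyshev with (ii) for the empirical-process term, \cref{prop:drift} with Cauchy--Schwarz and positivity for the drift, the central limit theorem for the leading term, and \cref{thm:eif} for efficiency. You simply write out details the paper leaves to the standard argument of Chernozhukov et al.: the explicit decomposition, the $P$-norms on $Z_0$ and $Z_1$ in (iii), and the variance-consistency step.
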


\begin{proof}
Conditional on the training samples, cross-fitting makes each validation-fold nuisance fit fixed. Decompose $\widehat\psi-\psi$ into the empirical mean of the true influence function, an empirical-process difference on the validation fold, and the conditional drift. The $L_2$ convergence in (ii) makes the second term $o_p(n^{-1/2})$ without Donsker conditions. Applying \cref{prop:drift} fold by fold and Cauchy--Schwarz bounds the drift by the three products in (iii), up to constants from positivity. Thus the remainder is $o_p(n^{-1/2})$. The central limit theorem applies to the first term. Consistency of the empirical second moment follows from (ii) and uniform integrability. Efficiency follows from \cref{thm:eif}. This is the standard cross-fitting argument \citep{chernozhukov2018} specialized to \cref{eq:seq-eif}.
\end{proof}

The conditions in (iii) constrain products of errors, not individual rates, so
one factor may converge slowly if its partner converges quickly. The familiar
$n^{-1/4}$ heuristic needs care: two factors that are exactly $O_p(n^{-1/4})$
give a product that is $O_p(n^{-1/2})$, which does \emph{not} satisfy (iii). A
correct simple sufficient condition is that each factor be $o_p(n^{-1/4})$, or
more generally that the two rates $r_1,r_2$ satisfy $r_1r_2=o(n^{-1/2})$. Very
small response or treatment probabilities remain a practical threat even when
formal positivity holds, motivating weight diagnostics, truncation sensitivity
analyses, and estimators targeted for finite-sample stability.

\section{Exact cost of complete-case coarsening}

Let $S=R_2$ and suppose the analyst replaces $(R_1,R_2)$ by the single
complete-case indicator $S$, so that the working data are
$O^{\cc}=(Z_0,S,SL_1,SL_2)$. Such an analyst cannot use $\pi_2$, which is a
function of $L_1$ and so is unavailable exactly on the records that $S$
discards; the estimable response functional is instead
\begin{equation}\label{eq:rho}
\rho(Z_0)=\Pp(S=1\mid Z_0)=\pi_1(Z_0)\,\E\{\pi_2(Z_1)\mid Z_0\}.
\end{equation}
The next restriction creates a submodel in which both approaches identify the
same target and, in addition, $\rho=\pi_1\pi_2$.

\begin{assumption}[Common-validity submodel]\label{ass:common}
The second-stage response probability does not depend on the intermediate confounder after $Z_0$:
\[
\pi_2(Z_1)=\pi_2(Z_0).
\]
\end{assumption}

Then $S\indep(L_1,L_2)\mid Z_0$ and $\rho(Z_0)=\pi_1(Z_0)\pi_2(Z_0)$, so CCMAR
holds. The efficient influence function based only on $O^{\cc}$ is
\begin{equation}\label{eq:cc-eif}
D_{\cc}(O^{\cc})=Q_0-\psi+
\frac{S}{\pi_1\pi_2}(G-Q_0).
\end{equation}

\begin{theorem}[Efficiency loss from collapsing the pattern]\label{thm:gap}
Under \cref{ass:causal,ass:smar,ass:common}, with $D_{\F}\in L_2(P)$,
\begin{equation}\label{eq:difference}
D_{\cc}-D_{\seq}
=\frac{R_1}{\pi_1}\left(\frac{C_2}{\pi_2}-1\right)(Q_1-Q_0),
\end{equation}
and this difference is orthogonal to $D_{\seq}$. Consequently,
\begin{equation}\label{eq:variance-gap}
\boxed{
\Var(D_{\cc})-\Var(D_{\seq})
=\E\!\left[
\frac{1-\pi_2(Z_0)}{\pi_1(Z_0)\pi_2(Z_0)}
\{Q_1(Z_1)-Q_0(Z_0)\}^2
\right]\ge0.}
\end{equation}
\end{theorem}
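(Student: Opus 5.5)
Under \cref{ass:common} we have $\pi_2=\pi_2(Z_0)$ and $S=R_2=R_1C_2$. The plan is to compute the difference \cref{eq:difference} algebraically, establish orthogonality by conditioning on the response indicators, and then read off the variance gap from the Pythagorean identity.

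\emph{Step 1 (algebra).} Start from \cref{eq:seq-eif,eq:cc-eif}. Write $G-Q_0=(G-Q_1)+(Q_1-Q_0)$ inside $D_{\cc}$. The $(G-Q_1)$ pieces of the two influence functions are identical, because both equal $R_2(G-Q_1)/(\pi_1\pi_2)$, and they cancel. What remains is
\[
\frac{R_1C_2}{\pi_1\pi_2}(Q_1-Q_0)-\frac{R_1}{\pi_1}(Q_1-Q_0),
\]
which is exactly \cref{eq:difference}. Note that $D_{\seq}$ carries the $Q_0-\psi$ term in both expressions, so this also cancels.

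\emph{Step 2 (orthogonality).} Set $\Delta=D_{\cc}-D_{\seq}$, which is $R_1\{C_2/\pi_2-1\}(Q_1-Q_0)/\pi_1$. I would show $\Delta\in\mathcal T_{\pi_2}$, since $R_1(C_2-\pi_2)=R_2-R_1\pi_2$ and the coefficient $(Q_1-Q_0)/(\pi_1\pi_2)$ is a function of $Z_1$. The proof of \cref{thm:eif} (with \cref{app:scores}) already shows that $D_{\seq}$ is orthogonal to the $\mathcal T_{\pi_2}$ component, so $\E(D_{\seq}\Delta)=0$.

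A direct check is also easy, and I would include it as a safeguard. Conditional on $(Z_2,R_1=1)$, \cref{eq:mar2} together with \cref{ass:common} gives $\E(C_2-\pi_2\mid Z_2,R_1=1)=0$. In the product $D_{\seq}\Delta$, the terms $q_0$ and $R_1(q_1-q_0)/\pi_1$ are measurable given $(Z_2,R_1)$, so they vanish after conditioning. The remaining cross term is
\[
\frac{R_1}{\pi_1}\cdot\frac{C_2}{\pi_1\pi_2}\,(D_{\F}-q_1)\Bigl(\frac{C_2}{\pi_2}-1\Bigr)(Q_1-Q_0).
\]
Its conditional expectation given $(Z_2,R_1=1)$ is a multiple of $(D_{\F}-q_1)\,h(Z_1)$. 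Taking the expectation given $Z_1$ and using the iterated-projection definition $q_1=\E(D_{\F}\mid Z_1)$ kills it.

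\emph{Step 3 (variance gap).} Both influence functions have mean zero. By orthogonality, $\Var(D_{\cc})=\Var(D_{\seq})+\E(\Delta^2)$. To compute $\E(\Delta^2)$, condition first on $(Z_1,R_1=1)$: here $\E\{(C_2/\pi_2-1)^2\}=(1-\pi_2)/\pi_2$. Then condition on $Z_0$ to replace $R_1$ by $\pi_1$, which turns the factor $R_1/\pi_1^2$ into $1/\pi_1$. This yields the boxed expression in \cref{eq:variance-gap}, and it is nonnegative termwise.

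\emph{Main obstacle.} The main delicacy is integrability and measurability: we need $\Delta$ and $D_{\seq}$ in $L_2$ so that the cross-term manipulations are legitimate. This follows from $D_{\F}\in L_2$, the positivity bounds, and Jensen's inequality applied to $Q_1$ and $Q_0$. A second point is to confirm that $\pi_2$ is treated as a function of $Z_0$ throughout, so that $\rho=\pi_1\pi_2$ and the weight in $D_{\cc}$ matches the sequential weight. The algebra itself is routine.
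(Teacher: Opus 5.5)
Your proof is correct. Steps 1 and 3, together with your direct check in Step 2, are the paper's argument: the proof of \cref{thm:gap} and \cref{app:efficiency} multiply $D_{\cc}-D_{\seq}$ against the three increments of $D_{\seq}$ term by term, compute $\E\{(D_{\cc}-D_{\seq})^2\mid Z_1\}$ from the sequential Bernoulli probabilities, and apply Pythagoras.

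Where you genuinely differ is your primary orthogonality argument. You write $D_{\cc}-D_{\seq}=(R_2-R_1\pi_2)h_2(Z_1)$ with $h_2=(Q_1-Q_0)/(\pi_1\pi_2)$. This identifies the coarsening difference as a second-stage response score, so orthogonality follows from the computation in \cref{app:scores} that $D_{\seq}$ has no $\mathcal T_{\pi_2}$ component.

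What this buys is an explanation rather than a verification. Collapsing the pattern adds variability only in a response-mechanism direction on which $\psi$ does not depend, which is why the gap is a pure variance inflation. The argument is legitimate under the theorem's hypotheses, because the calculation in \cref{app:scores} holds for any square-integrable $h_2(Z_1)$. It also does not use the upper bounds of \cref{lem:tangent} that \cref{thm:eif} assumes.

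The paper's direct computation is more self-contained. It needs only $\E(G-Q_1\mid Z_1)=0$ and $\E(C_2/\pi_2-1\mid Z_1,R_1=1)=0$.

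One small point of precision. In Step 3, and in the last step of your direct check, replacing $R_1$ by $\pi_1$ in front of a function of $L_1$ relies on \cref{eq:mar1}, that is, $\E(R_1\mid Z_1)=\pi_1(Z_0)$. Conditioning on $Z_1$ rather than $Z_0$ makes that use explicit.
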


\begin{proof}
Expanding \cref{eq:cc-eif} and subtracting \cref{eq:seq-eif} gives \cref{eq:difference}. Let $\Delta=Q_1-Q_0$ and $\varepsilon=G-Q_1$. Conditional on $Z_1$, $\E(\varepsilon\mid Z_1)=0$; conditional on $Z_1,R_1=1$, $\E(C_2/\pi_2-1)=0$. Multiplying \cref{eq:difference} by the three orthogonal increments in \cref{eq:seq-eif} and applying these conditional-mean properties shows
$\E\{(D_{\cc}-D_{\seq})D_{\seq}\}=0$. Finally,
\begin{align*}
\E\{(D_{\cc}-D_{\seq})^2\mid Z_1\}
&=\frac{1-\pi_2(Z_0)}{\pi_1(Z_0)\pi_2(Z_0)}\Delta^2,
\end{align*}
because $R_1$ and $C_2$ are Bernoulli with the stated sequential probabilities. Pythagoras gives \cref{eq:variance-gap}.
\end{proof}

\begin{corollary}[When the gain is strict]\label{cor:strict}
The pattern-aware efficiency bound is strictly smaller if and only if
\[
\Pp\!\left\{\pi_2(Z_0)<1\ \text{and}\ Q_1(Z_1)\ne Q_0(Z_0)\right\}>0.
\]
Equivalently, strict improvement requires second-stage missingness and residual predictive information in $L_1$ about the full-data influence function after conditioning on $Z_0$.
\end{corollary}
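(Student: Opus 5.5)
The plan is to read the corollary directly off the exact variance gap in \cref{thm:gap}. Write
\[
\Var(D_{\cc})-\Var(D_{\seq})=\E\{f(Z_0)\,\Delta(Z_1)^2\},
\]
where
\[
f(Z_0)=\frac{1-\pi_2(Z_0)}{\pi_1(Z_0)\pi_2(Z_0)}
\qquad\text{and}\qquad
\Delta=Q_1-Q_0 .
\]
Both factors are nonnegative. Since $\pi_2\le 1$, we have $f\ge 0$, and the positivity bound $\pi_1,\pi_2\ge\epsilon_\pi$ in \cref{ass:smar} makes $f$ finite. The integrand $f\Delta^2$ is therefore a nonnegative random variable. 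The basic measure-theoretic fact is that a nonnegative variable has strictly positive expectation if and only if it is strictly positive with positive probability. The gap is finite because $D_{\F}\in L_2(P)$, so $Q_1,Q_0\in L_2$ by Jensen and $f\le\epsilon_\pi^{-2}$.

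The second step identifies the event $\{f\Delta^2>0\}$ pointwise. The denominator $\pi_1\pi_2$ is bounded away from zero and away from infinity. Hence $f(Z_0)>0$ exactly when $\pi_2(Z_0)<1$, and $\Delta^2>0$ exactly when $Q_1(Z_1)\ne Q_0(Z_0)$. This gives
\[
\{f\Delta^2>0\}=\{\pi_2(Z_0)<1\}\cap\{Q_1(Z_1)\ne Q_0(Z_0)\}
\]
almost surely. Combined with the first step, this yields the displayed equivalence. "Strictly smaller bound" means $\Var(D_{\seq})<\Var(D_{\cc})$, and these are the respective efficiency bounds by \cref{thm:eif} and \eqref{eq:cc-eif}.

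For the interpretive "equivalently" sentence, two observations suffice. First, $\pi_2(Z_0)<1$ is second-stage missingness at that value of $Z_0$. Second, $Q_1\ne Q_0$ on a set of positive probability means that $\E\{G\mid Z_1\}$ is not a.s. a function of $Z_0$ alone, i.e.\ $L_1$ carries residual predictive information about $G$, and hence about $D_{\F}$, beyond $Z_0$.

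The only delicate point is that the two conditions must hold on the same set of positive probability, not merely each with positive probability separately. The proof should state the intersection explicitly. It also needs to handle versions of conditional expectations, so all statements are made almost surely.
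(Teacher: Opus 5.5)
Your proposal is correct and matches how the paper obtains the corollary: it is read directly off the nonnegative integrand in \eqref{eq:variance-gap} of \cref{thm:gap}. The key facts are that a nonnegative integrand has positive mean if and only if it is positive with positive probability, and that the weight $(1-\pi_2)/(\pi_1\pi_2)$ vanishes exactly where $\pi_2(Z_0)=1$. The paper gives no separate argument beyond that identity, so your explicit treatment of the intersection of the two events and of finiteness (via $D_{\F}\in L_2(P)$ and the positivity bound) supplies details the paper leaves implicit.
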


When $\pi_2$ depends on $L_1$, the complete-case probability conditional on $Z_0$
averages over an unobserved determinant of selection, and $S\indep(L_1,L_2)\mid Z_0$
generally fails. Then \cref{eq:variance-gap} is not an efficiency comparison:
the coarsened estimator may target the wrong observed-data functional. This is
the distinction between an \emph{information loss} under common validity and an
\emph{identification failure} under sequential MAR alone.

Outside \cref{ass:common} the coarsened analyst must also replace the two
quantities in \cref{eq:cc-eif} that are no longer estimable from $O^{\cc}$:
$\pi_1\pi_2$ becomes $\rho(Z_0)$ of \cref{eq:rho}, and $Q_0(Z_0)=\E(G\mid Z_0)$
becomes the complete-case projection
\[
Q_0^{\cc}(Z_0)=\E(G\mid Z_0,S=1)
=\frac{\E\{\pi_2(Z_1)Q_1(Z_1)\mid Z_0\}}{\E\{\pi_2(Z_1)\mid Z_0\}},
\]
which coincides with $Q_0$ under \cref{ass:common} and differs from it otherwise.
Neither substitution generally restores consistency. With the true causal pseudo-outcome and $\rho$, either choice of augmentation has population mean $\E(Q_0^{\cc})$: conditional expectation given $Z_0$ cancels the augmentation. \Cref{sec:sim} reports both this bias calculation and oracle simulations. The estimated-nuisance coarsened procedure also fits its causal regressions using complete records weighted by $1/\widehat\rho$; outside common validity, those regressions need not estimate the full-law causal nuisances.

\section{Simulation studies}\label{sec:sim}

\subsection{Gaussian stress design and response mechanisms}

The first design isolates the missingness comparison with oracle nuisance functions, then examines the effect of estimating those functions. There are no fully observed baseline covariates $W$. The full-data law is specified retrospectively:
\begin{align*}
A&\sim\mathrm{Bernoulli}(0.5),\\
Y\mid A=0&\sim\mathrm{Beta}(2,4),\qquad
Y\mid A=1\sim\mathrm{Beta}(4,2),\\
L_1\mid A,Y&\sim\mathrm{Bernoulli}\{\expit(-0.6+0.5A+0.25Y+0.1AY)\},\\
L_2\mid A,Y,L_1&\sim N(A+Y+2.5L_1Y,1.25^2).
\end{align*}
Refactorization of this law yields $e(V)$ and $\mu_a(V)$ by numerical quadrature. The ATE is $\psi=0.2558505161$ to ten decimal places; all summaries use the unrounded computed value. Stage-one response is
\[
\pi_1=\expit(2+0.1A+0.1Y).
\]
The two stage-two mechanisms are
\[
\pi_2=\begin{cases}
\expit(-1.1+0.1A+0.1Y),&\text{common validity},\\
\expit(-1.9+0.1A+0.1Y+2.2L_1),&\text{sequential MAR only}.
\end{cases}
\]
The first permits an efficiency comparison and the second illustrates identification failure after coarsening. The oracle experiment uses 2,000 samples of size 2,500 per mechanism. All three estimators are means of their influence-function pseudo-outcomes, with normal intervals based on the empirical variance within each sample. The coarsened oracle procedure uses the true causal functions and $Q_0$ but weights by $1/\rho(Z_0)$; the supplement also reports the alternative augmentation $Q_0^{\cc}$. Neither oracle version is a feasible causal estimator outside common validity.

This Gaussian design does not satisfy the uniform treatment-positivity condition in \cref{ass:causal}. The unbounded support of $L_2$ permits treatment probabilities arbitrarily close to zero or one. The weaker moment conditions $\E(1/e)=4.04$ and $\E\{1/(1-e)\}=5.65$ nevertheless give a finite full-data efficiency bound because $Y$ is bounded. Deterministic quadrature gives $\Var(G)=0.27924$ and $\E(G^4)=105.28$. These features make this a stress design for learning the causal nuisance functions; it is not an empirical verification of \cref{thm:asymptotics}.

\subsection{Oracle comparison}

\begin{table}[htbp]
\centering
\begin{threeparttable}
\caption{Oracle simulation in the Gaussian stress design ($n=2{,}500$, 2,000 replicates per scenario).}
\label{tab:main-results}
\small
\setlength{\tabcolsep}{4pt}
\begin{tabular}{@{}llrrrr@{}}
\toprule
Scenario & Estimator & Bias & SD & RMSE & Coverage \\
\midrule
Common validity & Full data & $-0.00005$ & $0.01043$ & $0.01043$ & 94.95\% \\
 & Coarsened & $-0.00004$ & $0.01775$ & $0.01775$ & 95.10\% \\
 & Sequential & $-0.00003$ & $0.01723$ & $0.01722$ & 95.55\% \\
\addlinespace
Sequential MAR only & Full data & $0.00013$ & $0.01052$ & $0.01051$ & 95.05\% \\
 & Coarsened & $-0.03301$ & $0.01848$ & $0.03783$ & 45.70\% \\
 & Sequential & $-0.00013$ & $0.01626$ & $0.01626$ & 94.65\% \\
\addlinespace
\bottomrule
\end{tabular}
\begin{tablenotes}[flushleft]\footnotesize
\item The target ATE is $0.2558505161$. The coarsened column uses true $G$, $Q_0$, and $\rho$. Bias and coverage Monte Carlo standard errors are recorded in the supplement; coverage MCSE is at most 1.12 percentage points.
\end{tablenotes}
\end{threeparttable}
\end{table}

The centered empirical variance reduction under common validity was 5.83\%, with paired-bootstrap Monte Carlo 95\% interval 4.03\%--7.57\% (10,000 resamples). This statistic uses deviations from the replicate means, rather than squared deviations from the true ATE, which would estimate a mean-squared-error reduction.

Deterministic product quadrature gives influence-function variances 0.79524 and 0.76224 for the coarsened and sequential procedures, respectively. Their difference is 0.03300, corresponding to a 4.15\% reduction. Doubling the quadrature resolutions changes each variance by less than $10^{-8}$, and the numerical residual in \cref{eq:variance-gap} is below $10^{-12}$. These are converged numerical evaluations, not symbolic integrations. The 4.15\% value lies within the simulation interval; the interval's width also shows why a finite Monte Carlo variance ratio should not be equated with the efficiency bound.

In the sequential-only scenario, the coarsened oracle bias was $-0.03301$ with 45.70\% coverage, whereas the sequential bias was $-0.00013$ with 94.65\% coverage. Quadrature gives population coarsened bias $-0.03282$. Replacing $Q_0$ by $Q_0^{\cc}$ gives simulated bias $-0.03302$; the two population biases coincide, as shown above. This is an identification comparison.

\begin{figure}[htbp]
\centering
\includegraphics[width=\textwidth]{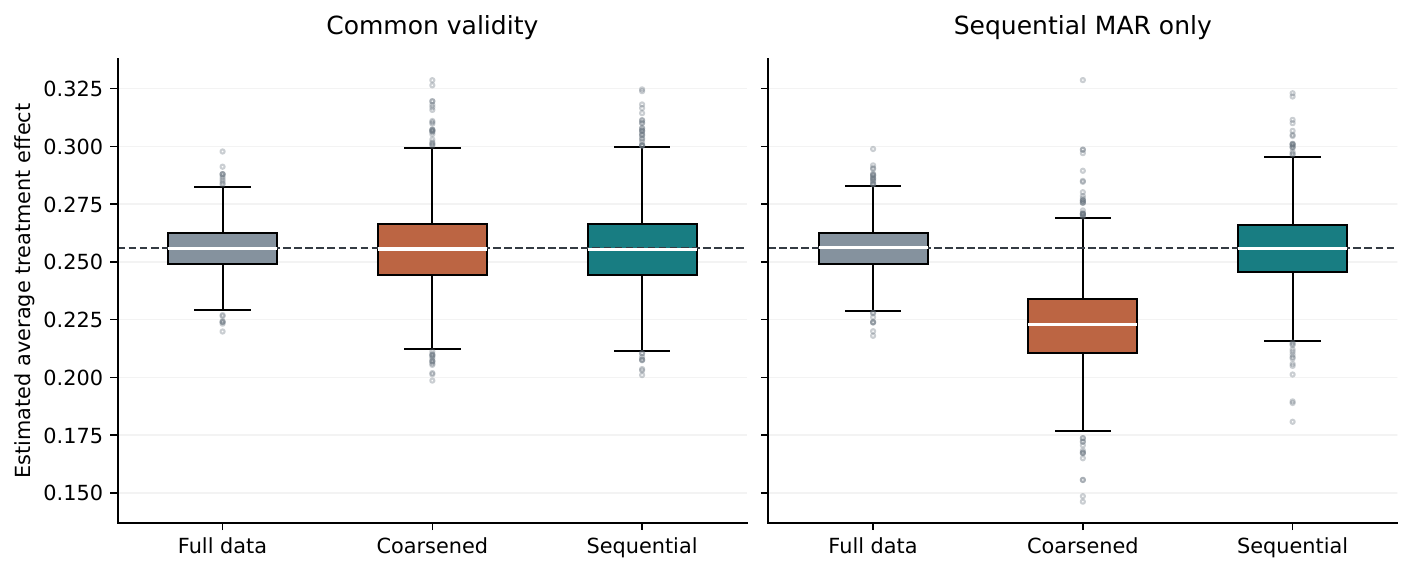}
\caption{Oracle estimates from the same replicates used in \cref{tab:main-results}. Boxes show quartiles, whiskers extend to 1.5 times the interquartile range, and points beyond the whiskers are shown. Dashed lines mark the true ATE.}
\label{fig:main-boxplot}
\end{figure}

The population pattern proportions obtained by quadrature are 23.98\% complete, 65.09\% $L_1$ only, and 10.93\% neither under common validity. The corresponding sequential-only proportions are 31.18\%, 57.89\%, and 10.93\%.

\subsection{Sensitivity to the second-stage response mechanism}

The sensitivity experiment keeps the Gaussian full-data law and uses $\pi_2=\expit(b+0.1A+0.1Y+\gamma L_1)$. Under common validity, $\gamma=0$ and the intercept $b$ is calibrated to population complete-case proportions 0.5117, 0.2400, and 0.1161. In the identification experiment, $\gamma$ is 0.75, 1.50, or 2.25 and $b$ is calibrated separately to maintain a population complete-case proportion of 0.24. Each configuration uses 5,000 samples of size 2,500. The intercepts, empirical response shares, biases, coverage, and uncertainty in the variance reduction are reported below. Calibration uses deterministic integration; reported empirical shares need not equal the population targets exactly.

\begin{table}[htbp]
\centering
\begin{threeparttable}
\caption{Efficiency sensitivity under common validity (5,000 replicates per row).}
\label{tab:sensitivity-efficiency}
\small
\setlength{\tabcolsep}{4pt}
\begin{tabular}{@{}lrrrrr@{}}
\toprule
Missingness & Intercept & Complete & $L_1$ only & Reduction & MC interval \\
\midrule
Low & $0.19998$ & 51.16\% & 37.92\% & 1.47\% & 0.68--2.25 \\
Moderate & $-1.09906$ & 24.01\% & 65.06\% & 3.66\% & 2.52--4.85 \\
High & $-2.00020$ & 11.61\% & 77.45\% & 5.05\% & 3.81--6.29 \\
\bottomrule
\end{tabular}
\begin{tablenotes}[flushleft]\footnotesize
\item Reduction is $100\{1-\widehat{\Var}(\widehat\psi_{\seq})/\widehat{\Var}(\widehat\psi_{\cc})\}$, using centered empirical variances. The interval is a paired-bootstrap 95\% Monte Carlo interval from 10,000 resamples; it is expressed in percentage points.
\end{tablenotes}
\end{threeparttable}
\end{table}
\begin{table}[htbp]
\centering
\begin{threeparttable}
\caption{Identification sensitivity with population complete-case proportion fixed at 24\% (5,000 replicates per row).}
\label{tab:sensitivity-identification}
\small
\setlength{\tabcolsep}{4pt}
\begin{tabular}{@{}rrrrrr@{}}
\toprule
$\gamma$ & Intercept & CC bias & CC coverage & Seq. bias & Seq. coverage \\
\midrule
$0.75$ & $-1.47346$ & $-0.01363$ & 83.22\% & $0.00011$ & 95.06\% \\
$1.50$ & $-1.91220$ & $-0.02633$ & 60.88\% & $-0.00000$ & 94.90\% \\
$2.25$ & $-2.41753$ & $-0.03696$ & 44.28\% & $-0.00018$ & 95.56\% \\
\bottomrule
\end{tabular}
\begin{tablenotes}[flushleft]\footnotesize
\item CC denotes the coarsened oracle procedure using $G,Q_0,\rho$; Seq. denotes sequential augmentation. Intercepts solve the stated population response-share equation by deterministic integration.
\end{tablenotes}
\end{threeparttable}
\end{table}

\begin{figure}[htbp]
\centering
\includegraphics[width=0.98\textwidth]{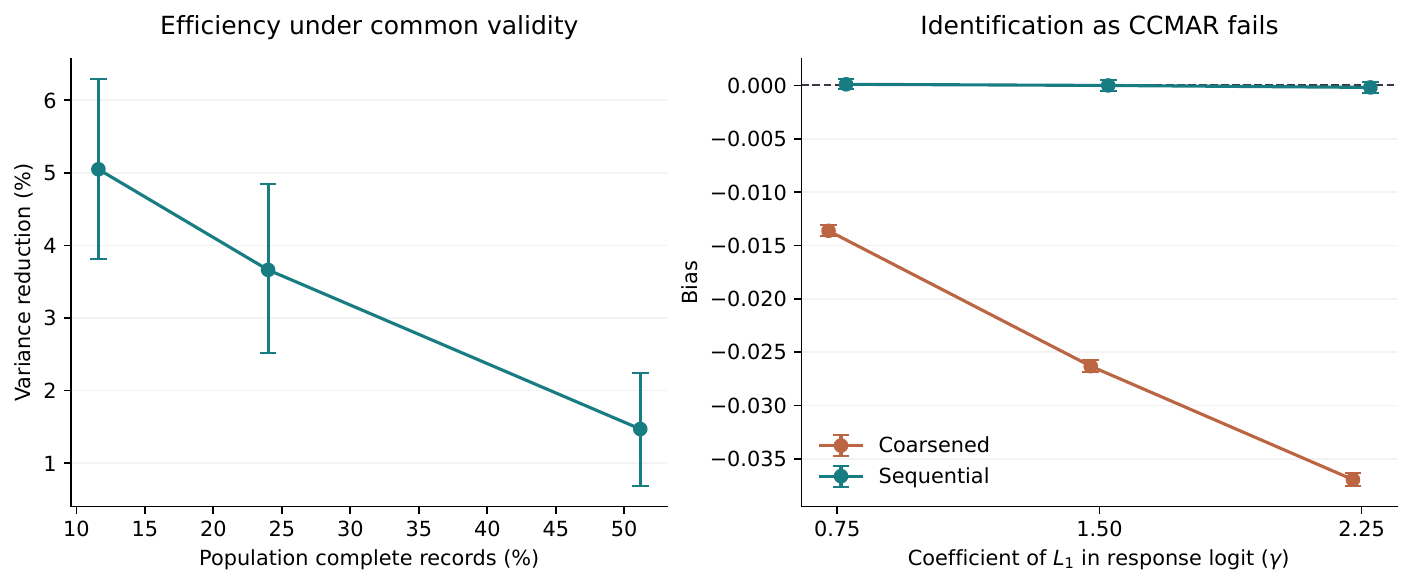}
\caption{Sensitivity summaries from the same results as \cref{tab:sensitivity-efficiency,tab:sensitivity-identification}. Left: centered empirical variance reduction and paired-bootstrap Monte Carlo intervals. Right: bias and Monte Carlo intervals computed as the estimated bias plus or minus 1.96 Monte Carlo standard errors.}
\label{fig:sensitivity}
\end{figure}

\FloatBarrier
\subsection{A design with bounded treatment probabilities}\label{sec:bounded}

To assess estimated-nuisance behavior under uniform positivity, we also specify a prospective full-data law. Independently, $L_1\sim\mathrm{Bernoulli}(0.5)$ and $L_2\sim\mathrm{Uniform}(-1,1)$. Let
\begin{align*}
A\mid L_1,L_2&\sim\mathrm{Bernoulli}\{\expit(-0.3+0.6L_1+0.7L_2)\},\\
\mu_a(L_1,L_2)&=0.25+0.15a+0.10L_1+0.05L_2,\\
Y\mid A,L_1,L_2&\sim\mathrm{Beta}\{8\mu_A,8(1-\mu_A)\}.
\end{align*}
Potential outcomes may be generated from these conditional beta laws independently of treatment given the confounders, ensuring conditional exchangeability. The ATE is exactly $0.15$, and $\expit(-1)\le e(V)\le\expit(1)$, approximately $[0.269,0.731]$. Outcomes and response probabilities are also bounded as required for the moment and positivity conditions. The two response mechanisms are the same functions of $A,Y,L_1$ as above. This design supplies an instance of the identifying assumptions and uniform positivity; it does not by itself verify the nuisance-rate conditions.

All nuisance functions are estimated with the fixed five-fold procedure in \cref{app:crossfit}. Each configuration uses 1,000 replicates. No tuning parameter is selected from the Monte Carlo performance. The full-data benchmark uses all simulated confounders; each incomplete-data estimator receives only its observed covariates.

\begin{table}[htbp]
\centering
\begin{threeparttable}
\caption{Estimated-nuisance performance with bounded treatment probabilities.}
\label{tab:bounded}
\small
\setlength{\tabcolsep}{4pt}
\begin{tabular}{@{}rlrrrrr@{}}
\toprule
$n$ & Estimator & Bias & SD & Mean SE & RMSE & Coverage \\
\midrule
\multicolumn{7}{l}{\textit{Common validity}} \\
2,500 & Full data & $-0.00003$ & $0.00661$ & $0.00665$ & $0.00661$ & 95.1\% \\
 & Coarsened & $-0.00007$ & $0.00935$ & $0.00958$ & $0.00935$ & 95.9\% \\
 & Sequential & $-0.00001$ & $0.00890$ & $0.00896$ & $0.00890$ & 96.0\% \\
\addlinespace[3pt]
5,000 & Full data & $-0.00013$ & $0.00455$ & $0.00467$ & $0.00455$ & 95.6\% \\
 & Coarsened & $-0.00019$ & $0.00607$ & $0.00618$ & $0.00607$ & 95.4\% \\
 & Sequential & $-0.00027$ & $0.00548$ & $0.00567$ & $0.00548$ & 96.4\% \\
\addlinespace[3pt]
\multicolumn{7}{l}{\textit{Sequential MAR only}} \\
2,500 & Full data & $0.00019$ & $0.00675$ & $0.00665$ & $0.00675$ & 95.1\% \\
 & Coarsened & $0.00437$ & $0.00870$ & $0.00850$ & $0.00973$ & 91.4\% \\
 & Sequential & $0.00000$ & $0.01227$ & $0.01104$ & $0.01227$ & 95.4\% \\
\addlinespace[3pt]
5,000 & Full data & $-0.00001$ & $0.00476$ & $0.00467$ & $0.00475$ & 94.0\% \\
 & Coarsened & $0.00403$ & $0.00574$ & $0.00566$ & $0.00701$ & 87.5\% \\
 & Sequential & $-0.00003$ & $0.00643$ & $0.00632$ & $0.00643$ & 95.6\% \\
\addlinespace[3pt]
\bottomrule
\end{tabular}
\begin{tablenotes}[flushleft]\footnotesize
\item SD is the empirical standard deviation; mean SE is the average estimated standard error. Coverage uses nominal 95\% normal intervals. Each configuration uses 1,000 replicates; the true ATE is 0.15. Coverage Monte Carlo SE is at most 1.58 percentage points (about 0.69 near 95\% coverage).
\end{tablenotes}
\end{threeparttable}
\end{table}
Across the four bounded-design configurations, sequential coverage ranges from 95.4\% to 96.4\%. The bias, empirical variability, and average estimated standard errors are shown together so that coverage can be assessed against its Monte Carlo uncertainty. The common-validity and sequential-only scenarios must still be distinguished when interpreting the coarsened results. A smaller bias does not guarantee a smaller finite-sample mean squared error: in the sequential-only configuration at $n=2{,}500$, sequential RMSE is $0.01227$, compared with $0.00973$ for the biased coarsened estimator. The efficiency-bound comparison does not establish a uniform finite-sample MSE ordering.

\subsection{Estimated nuisances in the stress design}\label{sec:estimated-nuisance}

The corresponding Gaussian-design results appear in \cref{app:crossfit}, including sample sizes 2,500 through 20,000 and oracle substitutions for selected nuisance layers. With all nuisances estimated, sequential coverage ranges from 83.8\% to 89.6\% across the eight sample-size and scenario configurations. The observed undercoverage is a limitation of this estimator and learner in that design; a reduction in bias or improvement with sample size does not establish the asymptotic rate conditions. The oracle-layer comparisons are diagnostics of the specified fitting procedure, rather than a general decomposition of all sources of finite-sample error.

\FloatBarrier

\section{Discussion}

The exact variance identity gives a direct interpretation of partial-record value. The loss from coarsening depends on the residual information $(Q_1-Q_0)^2$ and the factor $(1-\pi_2)/(\pi_1\pi_2)$. Holding the full-data law and $\pi_1$ fixed, decreasing $\pi_2$ increases the absolute variance gap wherever $Q_1\ne Q_0$. Weak response positivity can nevertheless increase both efficiency bounds and make finite-sample estimation unstable. Many $L_1$-only records yield no gain when $Q_1=Q_0$; when $\pi_2=1$, the gap is zero regardless of the predictive value of $L_1$.

The model comparison also clarifies assumption strength. Under \cref{ass:common}, complete-case coarsening wastes information but remains valid. If $L_1$ affects the chance that $L_2$ is observed, sequential MAR may remain plausible because that chance depends on information observed at stage two, whereas CCMAR generally fails after $L_1$ is hidden inside the all-complete indicator. In that setting, the pattern-aware estimator is not simply a more precise version of the CCMAR estimator; it is based on a different identifying model.

The present results sit between two literatures. General sequential augmentation and efficiency under monotone attrition are well established \citep{rrz1994,gill1997,barnwell2025}. Recent causal work addresses complete-case MAR \citep{levis2025}, multiple missing confounders or exposures under other MAR/MNAR restrictions \citep{wen2026}, and broad numerical comparisons of practical missing-confounder estimators \citep{benz2025,williamson2026}. The narrow value added here is the exact causal score comparison between retaining and collapsing a monotone intermediate pattern. We do not solve the nonmonotone $2^q$-pattern projection problem under the original CCMAR model, and we do not claim the generic monotone-MAR influence-function transformation as new.

The ordered-confounder construction in Section~4.2 of \citet{wen2026} already uses sequential observation probabilities and iterated projections to obtain an influence function and a targeted maximum likelihood estimator. That construction allows an incompletely observed exposure and imposes outcome-independent missingness restrictions. Here the exposure is fully observed, while both response hazards may depend on the observed treatment and outcome. Our comparison quantifies the efficiency loss from collapsing the intermediate pattern on the common-validity submodel and distinguishes that loss from bias when coarsening invalidates identification. The different identifying restrictions are essential to this comparison; an ordering of confounders alone does not distinguish the present work from the existing construction.

The two estimated-nuisance designs serve different purposes. The bounded-propensity design satisfies uniform positivity and assesses the implementation in a regular setting. The Gaussian design has a finite oracle efficiency bound but violates uniform treatment positivity; it exposes substantial sensitivity to nuisance fitting and poor interval calibration. Neither experiment establishes the product-rate conditions in \cref{thm:asymptotics}. In particular, fitting a fixed spline basis does not by itself establish the required nuisance convergence rates. All results are synthetic, and the simulation comparison uses one specified learner and two missingness mechanisms. It does not establish performance across arbitrary nuisance misspecification or relative to multiple imputation and other practical alternatives.

The variance difference in \cref{eq:variance-gap} is a first-order efficiency-bound comparison. Finite-sample nuisance error can obscure or reverse the corresponding ordering of estimator variances. The normal intervals in the stress design should therefore not be treated as reliable at the sample sizes studied. A bootstrap or a nuisance-aware variance procedure would require its own validation; the present results do not establish that either would repair the undercoverage.

Several extensions would strengthen the practical case: a real-data or plasmode application with defensible acquisition ordering; broader learner and nuisance-misspecification comparisons; and a validated inference procedure for difficult overlap settings. Monotonicity rules out $L_2$-only records and may not suit unstructured electronic health records. Neither sequential MAR nor causal exchangeability is testable from the observed data alone, so the identifying model must be justified by substantive knowledge of data collection. Extension to more monotone stages is also of interest, but is not established here.

\section*{Data and code availability}

All data used in this study are synthetic. The simulation code and results are available at
\url{https://github.com/KeivanBolouri/pattern-aware-sequential-mar}.
The original CCMAR software of \citet{levis2025} is available at
\url{https://github.com/alexlevis/flex-ate-confounders-MAR}.

\section*{Reproducibility}

Every numerical table and figure is generated from the result files in the supplement. Simulation streams are indexed by configuration and replicate; the base seed for estimated nuisances is 20260910. The README specifies all streams and software versions. The implementation accepts missing covariates as unavailable values and evaluates each augmentation only on the required observation pattern. Regression checks verify invariance to unavailable covariates, oracle response probabilities in training weights, equality with the directly evaluated oracle estimating equation, and centered variance calculations. Numerical quadrature is checked at two resolutions. These checks address specific implementation risks; they do not prove finite-sample coverage or the nuisance-rate assumptions of the asymptotic theorem.

\begingroup
\small
\begin{spacing}{1.15}
\setlength{\bibsep}{5pt}

\end{spacing}
\endgroup

\appendix
\bigskip
\section{Tangent space of the observed-data model}\label{app:tangent}

\Cref{thm:eif} verifies that $D_{\seq}$ represents the pathwise derivative of
$\psi$ along five families of score directions. That verification is exhaustive
only if those families span the tangent space of $\mathcal M$. The following
lemma supplies the missing step; it is the two-stage case of the
coarsening-at-random tangent-space decomposition
\citep{rrz1994,gill1997,tsiatis2006}, written out here because the causal
argument depends on it.

\begin{lemma}[Orthogonal decomposition of $L_2^0(P_O)$]\label{lem:tangent}
Let \cref{ass:smar} hold and suppose in addition
$\epsilon_\pi\le\pi_1(Z_0)\le1-\epsilon_\pi$ and
$\epsilon_\pi\le\pi_2(Z_1)\le1-\epsilon_\pi$ almost surely, so that all three
response patterns occur with positive probability. Then the subspaces
$\mathcal T_0,\mathcal T_{\pi_1},\mathcal T_1,\mathcal T_{\pi_2},\mathcal T_2$
defined in the proof of \cref{thm:eif} are closed, mutually orthogonal, and
\[
L_2^0(P_O)=\mathcal T_0\oplus\mathcal T_{\pi_1}\oplus\mathcal T_1
\oplus\mathcal T_{\pi_2}\oplus\mathcal T_2 .
\]
Consequently the tangent space of $\mathcal M$ at $P_O$ is all of $L_2^0(P_O)$,
and a pathwise differentiable parameter has a unique gradient.
\end{lemma}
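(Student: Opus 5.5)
The proof decomposes an arbitrary $f\in L_2^0(P_O)$ one observation pattern at a time and then checks that each piece lands in the corresponding subspace. Because $R_1,R_2$ take only the three values $(0,0),(1,0),(1,1)$, every function of $O$ can be written as
\[
f(O)=(1-R_1)f_{00}(Z_0)+R_1(1-C_2)f_{10}(Z_1)+R_2f_{11}(Z_2)
\]
for measurable $f_{00},f_{10},f_{11}$. Square-integrability of $f$ is equivalent to square-integrability of each pattern-specific piece. The two-sided bounds ensure that each of these pieces is identified on a set of positive probability.

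\emph{Explicit projections.} I would peel off components from the innermost stage outward.
\begin{enumerate}[leftmargin=*,itemsep=2pt]
\item Set $s_2=f_{11}-\E(f_{11}\mid Z_1)$. This is computable from the complete cases because, by \cref{eq:mar2}, the law of $L_2$ given $Z_1$ among $R_2=1$ equals its full-data law. Then $R_2s_2\in\mathcal T_2$.
\item The remainder $R_2\E(f_{11}\mid Z_1)+R_1(1-C_2)f_{10}$ is a function of $(R_1,R_2,Z_1)$ supported on $R_1=1$. Write it as $R_1\{a(Z_1)+(C_2-\pi_2)b(Z_1)\}$, where
\[
b=\E(f_{11}\mid Z_1)-f_{10},\qquad a=\pi_2\E(f_{11}\mid Z_1)+(1-\pi_2)f_{10}.
\]
This gives $R_1(C_2-\pi_2)b=(R_2-R_1\pi_2)b\in\mathcal T_{\pi_2}$.
\item Next, $R_1\{a-\E(a\mid Z_0)\}\in\mathcal T_1$. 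Here $\E(a\mid Z_0)$ may be taken given $R_1=1$, which is legitimate by \cref{eq:mar1}.
\item What remains is $R_1c(Z_0)+(1-R_1)f_{00}(Z_0)$ with $c=\E(a\mid Z_0)$. This splits as $(R_1-\pi_1)h_1+s_0$, where $h_1=c-f_{00}$ and $s_0=\pi_1c+(1-\pi_1)f_{00}$. The mean-zero property of $f$ forces $\E(s_0)=0$.
\end{enumerate}
This establishes that the sum of the five subspaces is all of $L_2^0(P_O)$.

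\emph{Orthogonality.} This is a sequence of iterated-expectation checks using only the conditional-mean constraints and the MAR factorization \cref{eq:observed-likelihood}.
\begin{itemize}[leftmargin=*,itemsep=2pt]
\item Conditioning on $(Z_1,R_1,R_2)$ kills $R_2s_2$ against anything in the other four spaces, since each is a function of $(Z_1,R_1,R_2)$ and $\E(s_2\mid Z_1,R_2=1)=\E(s_2\mid Z_1)=0$.
\item Conditioning on $(Z_1,R_1)$ kills $(R_2-R_1\pi_2)h_2$ against $\mathcal T_0,\mathcal T_{\pi_1},\mathcal T_1$.
\item $\mathcal T_1$ is orthogonal to $\mathcal T_0$ and $\mathcal T_{\pi_1}$ after conditioning on $(Z_0,R_1)$ and using \cref{eq:mar1}.
\item $\E\{(R_1-\pi_1)h_1s_0\}=0$ because $\E(R_1-\pi_1\mid Z_0)=0$.
\end{itemize}

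\emph{Closedness.} Given orthogonality, closedness of the sum follows from closedness of each summand, and each summand is the image of a closed subspace under a map that is bounded above and below in norm. For instance,
\[
\|R_2s_2\|^2=\E(\pi_1\pi_2s_2^2),
\]
which is comparable to $\E(s_2^2)$ because $\pi_1\pi_2\ge\epsilon_\pi^2$. For $\mathcal T_{\pi_1}$ the relevant quantity is $\E\{\pi_1(1-\pi_1)h_1^2\}$, and this is where the upper bound $\pi\le1-\epsilon_\pi$ enters. These norm equivalences also confirm that the explicit components constructed above are square-integrable.

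\emph{Conclusion and main difficulty.} Since every element of each $\mathcal T$ is the score of a bounded one-dimensional submodel of $\mathcal M$ (for example, perturb $p_2$ by $1+ts_2$ with $s_2$ bounded, then take closure), the tangent space contains their direct sum. The tangent space therefore equals $L_2^0(P_O)$, and any two gradients differ by an element orthogonal to it, hence by zero. The main obstacle is step~2 of the projections: the second-stage remainder must be representable in the prescribed form with square-integrable $a$ and $b$. This needs care in handling $f_{10}$ and $\E(f_{11}\mid Z_1)$ on the event $R_1=1$, and it relies on the bound $\pi_2\le1-\epsilon_\pi$ so that $f_{10}$ is determined on a non-null set.
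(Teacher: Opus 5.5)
Your proposal is correct and follows the paper's proof essentially step for step. Your $f_{00},f_{10},f_{11}$ are the paper's $\alpha,\beta,\gamma$, and your $s_2$, $b=h_2$, $a$, $c=\E(a\mid Z_0)$, $h_1$, $s_1=a-\E(a\mid Z_0)$ and $s_0$ coincide with the paper's components, with orthogonality checked by the same iterated conditioning. Your norm-equivalence argument for closedness and your observation that each element is a score of a bounded submodel make explicit two points the paper leaves implicit.
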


\begin{proof}
\emph{Orthogonality.} All five statements follow from iterated conditioning
together with \cref{eq:mar1,eq:mar2}. Two are representative. For
$s_0\in\mathcal T_0$ and $R_1s_1\in\mathcal T_1$,
\[
\E(s_0R_1s_1)=\E\{s_0\,\E(R_1s_1\mid Z_0)\}
=\E\{s_0\pi_1\E(s_1\mid Z_0)\}=0,
\]
where $\E(s_1\mid Z_0,R_1=1)=\E(s_1\mid Z_0)$ by \cref{eq:mar1}. For
$R_1s_1\in\mathcal T_1$ and $R_2s_2\in\mathcal T_2$, using $R_1R_2=R_2$ and
conditioning first on $Z_1$,
\[
\E(R_1s_1R_2s_2)=\E\{s_1\,\E(R_2s_2\mid Z_1)\}
=\E\{s_1\pi_1\pi_2\E(s_2\mid Z_1)\}=0,
\]
because \cref{eq:mar2} gives $\E(C_2\mid Z_1,R_1=1,L_2)=\pi_2(Z_1)$ and
\cref{eq:mar1} gives $L_2\mid(Z_1,R_1=1)\sim L_2\mid Z_1$. The remaining pairs
are obtained the same way, using $\E(R_1-\pi_1\mid Z_0)=0$ and
$\E\{R_1(C_2-\pi_2)\mid Z_1\}=0$.

\emph{Completeness.} The observed datum is determined by $Z_0$ on $\{R_1=0\}$,
by $Z_1$ on $\{R_1=1,C_2=0\}$, and by $Z_2$ on $\{R_2=1\}$. Hence any
$f\in L_2^0(P_O)$ can be written as
\[
f=\alpha(Z_0)\1\{R_1=0\}+\beta(Z_1)\1\{R_1=1,C_2=0\}+\gamma(Z_2)\1\{R_2=1\},
\]
and the two-sided bounds make $\alpha,\beta,\gamma$ square integrable. Define
\begin{align*}
h_2&=\E(\gamma\mid Z_1)-\beta, &\qquad s_2&=\gamma-\E(\gamma\mid Z_1),\\
h_1&=\E\!\left[(1-\pi_2)\beta+\pi_2\E(\gamma\mid Z_1)\,\middle|\,Z_0\right]-\alpha,
&\qquad s_1&=\beta+\pi_2h_2-\alpha-h_1,\\
s_0&=\alpha+\pi_1h_1.
\end{align*}
By construction $\E(s_2\mid Z_1)=0$ and $\E(s_1\mid Z_0)=0$. Writing
$D=s_0+(R_1-\pi_1)h_1+R_1s_1+R_1(C_2-\pi_2)h_2+R_2s_2$ and evaluating on each
pattern gives $D=\alpha$ when $R_1=0$; $D=\alpha+h_1+s_1-\pi_2h_2=\beta$ when
$R_1=1,C_2=0$; and $D=\beta+h_2+s_2=\gamma$ when $R_2=1$. Thus $D=f$. Finally
\[
\E(s_0)=\E\{(1-\pi_1)\alpha\}+\E\{\pi_1(1-\pi_2)\beta\}+\E(\pi_1\pi_2\gamma)=\E(f)=0,
\]
the three terms being exactly the pattern probabilities implied by
\cref{ass:smar}. Uniqueness of the representation follows from orthogonality.
\end{proof}

\begin{remark}[Boundary cases]\label{rem:boundary}
Only the completeness half uses the upper bounds on $\pi_1,\pi_2$, and their
role is solely to guarantee that each of the three response patterns occurs
with positive probability, so that $\alpha,\beta,\gamma$ are identified and
square integrable.

The two boundaries behave differently, and it is worth being precise about
which subspace degenerates. Suppose $\pi_2\equiv1$, so that $L_2$ is observed
whenever $L_1$ is and $R_2=R_1$. Then $C_2-\pi_2\equiv0$, so
$\mathcal T_{\pi_2}=\{0\}$: there is no second response mechanism to score.
But $\mathcal T_2=\{R_2s_2(Z_2):\E(s_2\mid Z_1)=0\}$ is \emph{not} affected,
because the conditional law of $L_2$ given $Z_1$ is still a free component of
the model and is still observed on $\{R_1=1\}$. The pattern
$\{R_1=1,C_2=0\}$ is empty, $\beta$ drops out, and the construction in the proof
goes through with $h_2=0$, $s_2=\gamma-\E(\gamma\mid Z_1)$,
$s_1=\E(\gamma\mid Z_1)-\E(\gamma\mid Z_0)$ and $h_1=\E(\gamma\mid Z_0)-\alpha$,
giving $L_2^0(P_O)=\mathcal T_0\oplus\mathcal T_{\pi_1}\oplus\mathcal T_1
\oplus\mathcal T_2$. Symmetrically, $\pi_1\equiv1$ removes
$\mathcal T_{\pi_1}$ and leaves $\mathcal T_1$ intact. The influence function in \cref{thm:eif} remains valid. When $\pi_2\equiv1$, its two inverse-weighted increments combine:
\[
\frac{R_1}{\pi_1}(Q_1-Q_0)+\frac{R_1}{\pi_1}(G-Q_1)
=\frac{R_1}{\pi_1}(G-Q_0).
\]
The influence function is therefore $Q_0-\psi+R_1(G-Q_0)/\pi_1$; the residual increment does not vanish. When $\pi_1\equiv1$, the first two terms instead combine to $Q_1-\psi$, leaving $Q_1-\psi+R_2(G-Q_1)/\pi_2$.

When $\pi_2=1$ on a set of positive probability rather than everywhere, the same
argument applies on that set and the general one on its complement. There
\cref{eq:variance-gap} contributes nothing, since its integrand carries the
factor $1-\pi_2$; this is the degenerate boundary of \cref{cor:strict}.
\end{remark}

\section{Additional derivation of the efficiency identity}\label{app:efficiency}

This appendix makes explicit the orthogonality used in \cref{thm:gap}. Let
\[
\Delta=Q_1-Q_0,\qquad \varepsilon=G-Q_1,
\]
so $\E(\Delta\mid Z_0)=0$ and $\E(\varepsilon\mid Z_1)=0$. Under \cref{ass:common},
\begin{align*}
D_{\seq}&=Q_0-\psi+\frac{R_1}{\pi_1}\Delta
+\frac{R_1C_2}{\pi_1\pi_2}\varepsilon,\\
U:=D_{\cc}-D_{\seq}
&=\frac{R_1}{\pi_1}\left(\frac{C_2}{\pi_2}-1\right)\Delta.
\end{align*}
The product of $U$ with $Q_0-\psi$ has mean zero by conditioning on $Z_1,R_1$ and averaging over $C_2$. The product with $R_1\Delta/\pi_1$ also has mean zero for the same reason. For the last term,
\begin{align*}
\E\!\left[
U\frac{R_1C_2}{\pi_1\pi_2}\varepsilon
\right]
&=\E\!\left[
\frac{1-\pi_2}{\pi_1\pi_2}\Delta\varepsilon
\right]=0,
\end{align*}
because the coefficient and $\Delta$ are measurable with respect to $Z_1$, whereas $\E(\varepsilon\mid Z_1)=0$. Thus $U\perp D_{\seq}$. Moreover,
\begin{align*}
\E(U^2\mid Z_1)
&=\frac{\Delta^2}{\pi_1^2}
\E\!\left[R_1\left(\frac{C_2}{\pi_2}-1\right)^2\middle|Z_1\right]\\
&=\frac{1-\pi_2}{\pi_1\pi_2}\Delta^2,
\end{align*}
which establishes the identity.

\section{Observed-score calculations for response mechanisms}\label{app:scores}

For completeness, consider $S_{\pi_1}=(R_1-\pi_1)h_1(Z_0)$. Using \cref{eq:eif-centered}, each inner product $\E(D_{\seq}S_{\pi_1})$ is zero: terms without $R_1$ vanish because $\E(R_1-\pi_1\mid Z_0)=0$, while terms containing the inverse-weighted residual increments vanish after conditioning successively on $Z_0$ and $Z_1$. For stage two, write the observed score as
\[
S_{\pi_2}=(R_2-R_1\pi_2)h_2(Z_1)=R_1(C_2-\pi_2)h_2(Z_1).
\]
The $q_0$ and $q_1-q_0$ components are orthogonal because the centered Bernoulli residual has conditional mean zero. For the terminal component,
\begin{align*}
\E\!\left[
\frac{R_2}{\pi_1\pi_2}(D_{\F}-q_1)
(R_2-R_1\pi_2)h_2(Z_1)
\right]
\end{align*}
equals, after response averaging,
\[
\E\!\left[(1-\pi_2(Z_1))(D_{\F}-q_1)h_2(Z_1)\right]=0,
\]
because $\E(D_{\F}-q_1\mid Z_1)=0$. Thus the canonical gradient carries no component in either response-mechanism tangent space, as required because the causal target is a functional of the full-data law alone.

\section{Cross-fitted estimation with estimated nuisances}\label{app:crossfit}

\subsection*{Observed-data implementation}

Each sample is randomly divided into five folds. All basis transformations and fitted models used on a validation fold are trained on the other four folds. The outcome basis uses all training values of $Y$. The incomplete-data causal basis uses only $L_2$ values with $R_2=1$ in the training set. A separate full-data benchmark chooses its own basis using all training $L_2$ values. Missing covariates are represented by unavailable values: the implementation reads $L_1$ only when $R_1=1$ and $L_2$ only when $R_2=1$. There is no fitting, prediction, or knot selection using hidden simulated confounders in the incomplete-data procedures.

Continuous coordinates use cubic B-splines with six empirical quantile knots, including the boundary knots, and no spline bias column. Each fitted regression includes an intercept. The design for $Z_0$ contains the spline in $Y$, $A$, and their interactions. The design for $Z_1$ is saturated in $(A,L_1)$, including their products with the spline in $Y$. The causal design contains the spline in $L_2$, $L_1$, and their interactions. Logistic regressions use an $\ell_2$ penalty with inverse regularization parameter $C=1$; ridge regressions use penalty $10^{-3}$. Hyperparameters are fixed across designs, sample sizes, and replicates. Logistic fitting allows at most 2,000 iterations, and a convergence warning stops the run instead of being silently suppressed.

The response models are fitted on all training records for $\pi_1$ and $\rho$, and on $R_1=1$ records for $\pi_2$. Causal regressions are fitted on complete records with weights $1/(\widehat\pi_1\widehat\pi_2)$ for the sequential procedure and $1/\widehat\rho$ for the coarsened procedure. The sequential $Q_1$ regression uses complete records. Its $Q_0$ regression uses the augmented stage-two pseudo-outcome on training records with $R_1=1$. The coarsened projection regression uses complete records. These projection regressions use causal predictions fitted within the same outer training fold; there is no additional inner cross-fitting. The held-out fold is excluded from all these training operations.

Fitted treatment probabilities are truncated to $[0.01,0.99]$, and fitted response probabilities to $[0.01,1]$. Oracle substitutions are evaluated without truncation. In the stress design the true propensity can lie outside the fitted interval, so fixed truncation does not preserve the true propensity and cannot be assumed asymptotically harmless. In the bounded design the truncation thresholds lie outside the true propensity range. No truncation-sensitivity result is claimed without a corresponding run in the supplement.

\subsection*{Behavior across sample sizes}

\begin{table}[htbp]
\centering
\begin{threeparttable}
\caption{Estimated-nuisance performance in the Gaussian stress design.}
\label{tab:crossfit-n}
\small
\setlength{\tabcolsep}{4pt}
\begin{tabular}{@{}rlrrrrr@{}}
\toprule
$n$ & Estimator & Bias & SD & Mean SE & RMSE & Coverage \\
\midrule
\multicolumn{7}{l}{\textit{Common validity}} \\
2,500 & Full data & $0.00023$ & $0.01193$ & $0.00947$ & $0.01193$ & 87.0\% \\
 & Coarsened & $-0.00076$ & $0.04699$ & $0.03016$ & $0.04697$ & 84.0\% \\
 & Sequential & $-0.00135$ & $0.04827$ & $0.03087$ & $0.04826$ & 84.0\% \\
\addlinespace[3pt]
5,000 & Full data & $0.00004$ & $0.00761$ & $0.00665$ & $0.00760$ & 91.2\% \\
 & Coarsened & $0.00048$ & $0.02590$ & $0.01754$ & $0.02589$ & 85.6\% \\
 & Sequential & $0.00053$ & $0.02712$ & $0.01786$ & $0.02711$ & 85.7\% \\
\addlinespace[3pt]
10,000 & Full data & $0.00042$ & $0.00573$ & $0.00471$ & $0.00574$ & 88.0\% \\
 & Coarsened & $-0.00028$ & $0.01335$ & $0.00968$ & $0.01334$ & 85.0\% \\
 & Sequential & $-0.00033$ & $0.01325$ & $0.00965$ & $0.01324$ & 84.8\% \\
\addlinespace[3pt]
20,000 & Full data & $-0.00020$ & $0.00417$ & $0.00337$ & $0.00417$ & 88.8\% \\
 & Coarsened & $-0.00012$ & $0.00795$ & $0.00628$ & $0.00794$ & 87.6\% \\
 & Sequential & $-0.00016$ & $0.00792$ & $0.00619$ & $0.00791$ & 88.4\% \\
\addlinespace[3pt]
\multicolumn{7}{l}{\textit{Sequential MAR only}} \\
2,500 & Full data & $0.00040$ & $0.01189$ & $0.00954$ & $0.01189$ & 87.8\% \\
 & Coarsened & $-0.03198$ & $0.03702$ & $0.02398$ & $0.04891$ & 61.9\% \\
 & Sequential & $0.00104$ & $0.05048$ & $0.03204$ & $0.05047$ & 83.8\% \\
\addlinespace[3pt]
5,000 & Full data & $0.00027$ & $0.00784$ & $0.00671$ & $0.00784$ & 91.2\% \\
 & Coarsened & $-0.03238$ & $0.01761$ & $0.01308$ & $0.03686$ & 32.2\% \\
 & Sequential & $0.00076$ & $0.02027$ & $0.01546$ & $0.02027$ & 87.8\% \\
\addlinespace[3pt]
10,000 & Full data & $0.00005$ & $0.00536$ & $0.00473$ & $0.00536$ & 92.4\% \\
 & Coarsened & $-0.03283$ & $0.01057$ & $0.00843$ & $0.03449$ & 8.0\% \\
 & Sequential & $-0.00006$ & $0.01451$ & $0.01013$ & $0.01450$ & 85.8\% \\
\addlinespace[3pt]
20,000 & Full data & $0.00000$ & $0.00386$ & $0.00336$ & $0.00385$ & 90.0\% \\
 & Coarsened & $-0.03300$ & $0.00697$ & $0.00575$ & $0.03372$ & 0.0\% \\
 & Sequential & $0.00016$ & $0.00827$ & $0.00626$ & $0.00825$ & 89.6\% \\
\addlinespace[3pt]
\bottomrule
\end{tabular}
\begin{tablenotes}[flushleft]\footnotesize
\item SD is the empirical standard deviation; mean SE is the average estimated standard error. Coverage uses nominal 95\% normal intervals. Replicate counts are 1,000, 1,000, 500, and 250 at increasing sample sizes. Coverage Monte Carlo SE is at most 1.58, 1.58, 2.24, and 3.16 percentage points, respectively.
\end{tablenotes}
\end{threeparttable}
\end{table}
Under common validity, the ratio of mean estimated SE to empirical SD for the sequential estimator is 0.64, 0.66, 0.73, 0.78 at the four increasing sample sizes. These ratios assess the scale of interval miscalibration directly. The finite-sample precision comparison is given by the empirical SDs, whereas \cref{eq:variance-gap} compares first-order efficiency bounds. They need not agree at these sample sizes. Outside common validity, the coarsened estimator's bias reflects its invalid identifying model as well as nuisance fitting.

\FloatBarrier
\subsection*{Oracle substitutions and implementation checks}

\begin{table}[htbp]
\centering
\begin{threeparttable}
\caption{Oracle-substitution diagnostics in the Gaussian design ($n=2{,}500$).}
\label{tab:crossfit-layers}
\footnotesize
\setlength{\tabcolsep}{4pt}
\begin{tabular}{@{}llrrrrrr@{}}
\toprule
Scenario & Known & Seq. bias & Seq. SD & Seq. cov. & CC bias & CC SD & CC cov. \\
\midrule
Common & None & $-0.00135$ & $0.04827$ & 84.0\% & $-0.00076$ & $0.04699$ & 84.0\% \\
 & Causal & $0.00044$ & $0.01941$ & 96.0\% & $0.00077$ & $0.01877$ & 96.6\% \\
 & Response & $-0.00015$ & $0.04446$ & 82.1\% & $0.00009$ & $0.04454$ & 82.8\% \\
 & All & $0.00076$ & $0.01789$ & 96.8\% & $0.00107$ & $0.01830$ & 97.6\% \\
\addlinespace
Sequential & None & $0.00104$ & $0.05048$ & 83.8\% & $-0.03198$ & $0.03702$ & 61.9\% \\
 & Causal & $-0.00037$ & $0.01934$ & 95.5\% & $-0.03241$ & $0.01832$ & 49.4\% \\
 & Response & $0.00193$ & $0.04868$ & 82.2\% & $-0.03158$ & $0.03625$ & 59.2\% \\
 & All & $-0.00113$ & $0.01660$ & 95.0\% & $-0.03321$ & $0.01855$ & 45.6\% \\
\addlinespace
\bottomrule
\end{tabular}
\begin{tablenotes}[flushleft]\footnotesize
\item Causal: true $e,\mu_0,\mu_1$. Response: true $\pi_1,\pi_2,\rho$ throughout training and validation. All: true causal, response, and projection functions, with $Q_0^{\cc}$ for the coarsened procedure. There are 1,000 replicates per row except 500 for All. Corresponding configurations share random-number streams.
\end{tablenotes}
\end{threeparttable}
\end{table}

The causal-oracle option substitutes the true $e$ and both outcome regressions wherever $G$ is evaluated, including the training projections. The response-oracle option substitutes true $\pi_1,\pi_2,\rho$ in the causal training weights, the training pseudo-outcome for $Q_0$, and the final validation-fold augmentation. The all-oracle option also replaces the projection functions. True projection functions are projections of the true causal pseudo-outcome, so that option requires the causal-oracle option. Under sequential MAR alone, the coarsened all-oracle diagnostic uses $Q_0^{\cc}$, the complete-case conditional projection.

For the sequential procedure under common validity, the empirical SD is $0.04827$ with all nuisances estimated, $0.01941$ with the causal functions known, and $0.04446$ with the response functions known. These comparisons identify sensitivity to the specified nuisance fitting choices; they do not show that every remaining discrepancy is caused by a single layer. Shared replicate seeds make the comparisons paired, and all replicate-level results are retained.

The supplied regression checks verify invariance to hidden covariate values, oracle training weights, and equality of the all-oracle calculation with \cref{eq:seq-eif}. They also distinguish variance from MSE. These checks target implementation correctness; an all-oracle match alone neither validates estimated nuisances nor establishes nominal coverage.

\end{document}